\documentclass[a4paper]{article}
\usepackage{amscd,amsmath,amssymb}
\usepackage[utf8]{inputenc}
\usepackage{xcolor}
\usepackage{float}
\usepackage{tikz}
\usepackage{tikz-feynman}
\usetikzlibrary{decorations.pathreplacing}
\usetikzlibrary{calc}
\usepackage{amsthm}
\usepackage{ytableau}

\newcommand{\Cl}{\mathcal{C}l}

\newcommand{\hC}{\widehat{C}}
\newcommand{\ad}{{\sf ad}}

\newcommand{\kg}{\mathsf{g}}
\newcommand{\mcU}{\mathcal{U}}

\newcommand{\oDelta}{\overline{\Delta}}

\newcommand{\Spin}{\mathrm{Spin}}

\newcommand{\hR}{\hat{R}}
\newcommand{\hsR}{\hat{\mathsf{R}}}

\newcommand{\tikznode}[3][inner sep=0pt]{\tikz[remember
picture,baseline=(#2.base)]{\node(#2)[#1]{$#3$};}}

\newcommand{\be}{\begin{equation}}
\newcommand{\ee}{\end{equation}}
\newcommand{\bea}{\begin{eqnarray}}
\newcommand{\eea}{\end{eqnarray}}

\newcommand{\ba}{\begin{array}} \newcommand{\ea}{\end{array}}

\DeclareMathOperator{\proj}{P}

\DeclareMathOperator{\tr}{tr}

\DeclareMathOperator{\diag}{diag}

\newcommand{\okg}{\overline{\mathsf{g}}}

\newcommand{\mfg}{\mathfrak{g}}

\allowdisplaybreaks

\numberwithin{equation}{section}

\newtheorem{proposition}{Proposition}
\newtheorem{lemma}{Lemma}
\theoremstyle{remark}
\newtheorem{remark}{Remark}

\ytableausetup{centertableaux}

\begin{document}
\begin{flushright}
\today \\
\end{flushright}
\vspace{2.3cm}

\begin{center}
{\huge\bf Split Casimir Operator of the Lie Algebra $so_{2r}$ in Spinor Representations, Colour Factors, and the Yang–Baxter Equation}
\end{center}
\vspace{1cm}

\begin{center}
{\Large \bf A. P. Isaev${}^{a,b}$ and A. A. Provorov${}^{a}$}
\end{center}

\vspace{0.2cm}

\begin{center}
{${}^a$ \it
Bogoliubov  Laboratory of Theoretical Physics,\\
Joint Institute for Nuclear Research,
141980 Dubna, Russia}\vspace{0.1cm}

{${}^b$ \it Lomonosov Moscow State University, \\
 Physics Faculty, Russia}\vspace{0.1cm}

{\tt isaevap@theor.jinr.ru, aleksanderprovorov@gmail.com}
\end{center}
\vspace{3cm}

\begin{abstract}
In this paper, we derive characteristic identities for the split Casimir operator of the Lie algebra $so_{2r}$ in tensor products of spinor representations of the same and opposite chiralities. Using these identities, we explicitly construct projectors onto invariant subspaces of this operator and compute their traces. The results obtained allow us to derive explicit expressions for the colour factors of ladder Feynman diagrams in gauge theories with gauge group $\Spin(2r)$.In addition, we obtain a new form of a solution to the Yang–Baxter equation that is invariant under the action of the Lie algebra $so_{2r}$ in spinor representations.
\end{abstract}

\vspace{1cm}
Key words: invariant subspace, projector, split Casimir operator, Clifford algebra, spinor, colour factor, Feynman diagram, Yang-Baxter equation, R-matrix

\newpage

\pagenumbering{arabic}
\setcounter{page}{2}

\section{Introduction}
The split (more generally, $n$-split) Casimir operator plays an important role in the theory of Lie algebras and Lie superalgebras, as well as in representation theory. In particular, a number of works~\cite{IsKrad2,IsPrad2,IsKrPrad3,IsPrad3,Prad3,IsKrMkAv,IsKrad5} established universal characteristic identities for the 2-, 3-, 4-, and 5-split Casimir operators. These results made it possible to construct explicitly the projectors onto universal subrepresentations of tensor powers of the adjoint representation of simple Lie algebras and superalgebras. In~\cite{IsVUB}, similar calculations were carried out for the 2-split Casimir operator acting in the tensor products $T\otimes Y_n$ and $T\otimes Y_n'$, where $T$ denotes the fundamental representation and $Y_n$, $Y_n'$ are universal representations of simple Lie algebras. Moreover, the split Casimir operator admits an interpretation in terms of the colour factors of Feynman diagrams in non-Abelian gauge theories~\cite{IsPr:Colour} (see also~\cite{Cvit}), which makes it a convenient tool for calculating scattering amplitudes in perturbation theory.

The problem of calculating colour factors occupies a central position in modern particle physics. The Standard Model, which currently provides the most precise theoretical description of the fundamental interactions, is a non-Abelian gauge theory with gauge group $SU(3) \times SU(2) \times U(1)$. Consequently, the computation of colour factors and the development of methods for their evaluation play a crucial role in the calculation of scattering amplitudes and other physical quantities.

Apart from the Standard Model, grand unified theories (GUTs) are of considerable interest, as they provide a unified description of the strong, weak, and electromagnetic interactions at high energies. The first such theory was proposed in~\cite{GeorGlas}, where it was shown that the fermion fields of the Standard Model can be unified into the $\overline{5}$ and $10$ multiplets of $SU(5)$. In a subsequent work~\cite{FritMink}, the possibility of constructing a grand unified theory with gauge group $SO(10)$ (more precisely, $\Spin(10)$) was considered. In this framework, all quarks and leptons of a single generation are accommodated within a single 16-dimensional spinor multiplet. Despite difficulties in explaining the hierarchy of scales associated with the breaking of $SO(10)$ symmetry down to the Standard Model symmetry, this theory has proven rather successful. In particular, it was shown in~\cite{G-MRamSlan} that the presence in the spinor multiplet of an additional particle, the right-handed neutrino absent from the $\overline{5}$ and $10$ multiplets of $SU(5)$, makes it possible to construct mechanisms explaining the origin of neutrino masses and their smallness. Moreover,~\cite{BaMoh} demonstrated the strong predictive power of the minimal $SO(10)$ model in describing the neutrino spectrum. Other popular grand unified theories are based on the gauge groups $E_6$ and $E_8$~\cite{E_6 GUT, E_8 GUT}.

Another important area of application of colour factors in quantum field theory is the so-called $1/N$ expansion introduced by 't~Hooft in~\cite{tHooft}. He showed that the structure of an $SU(N)$ gauge theory is drastically simplified in the limit $N \to \infty$, since in this case only planar Feynman diagrams survive in the computation of various amplitudes. This led 't~Hooft to formulate a method of calculations in quantum chromodynamics known as the $1/N$ expansion~\cite{tHooft}. This method had a significant impact on the development of quantum field theory and string theory (see the review~\cite{AGMO}); it found applications in the analysis of confinement~\cite{tHooft2}, properties of baryons~\cite{Witten}, and matrix models~\cite{FrGinZinn}.

The split Casimir operator also has applications in the field of quantum integrable systems. It is used in constructing solutions to the quantum Yang--Baxter equation that are invariant under the action of Lie algebras and Lie superalgebras~\cite{IsKrad2,IsPrad2,MacKay}. This equation first appeared in the works of McGuire~\cite{McGuire} and Yang~\cite{Yang} and plays a crucial role in the theory of quantum integrability~\cite{Baxter,Zam} (see also~\cite{ChaPr} and references therein). In particular, within the framework of the quantum inverse scattering method~\cite{SkTFad}, certain structures emerged that eventually led to the development of the theory of quantum groups~\cite{Drin,Jim,Jim2}, which are deformations of Lie groups and Lie algebras and describe the symmetries of quantum integrable systems~\cite{PasSal,KarZap}.

For exceptional Lie algebras, $R$-matrices were examined in the setting of the Freudenthal–Tits magic square by Westbury~\cite{West}, with connections to Vogel’s parametrisation.

The paper is organised as follows. In Section~2 we introduce the split Casimir operator of a simple Lie algebra and discuss its main properties. Sections~3 and~4 present two alternative approaches to deriving characteristic identities for the split Casimir operator in tensor products of spinor representations. Section~5 contains an example of the computation of colour factors for Feynman diagrams in a gauge theory with gauge group $\Spin(2r)$, while Section~6 is devoted to the construction of solutions to the Yang--Baxter equation that are invariant under the action of the Lie algebra $so_{2r}$ in spinor representations. The Conclusion briefly summarises the main results of the paper.

\section{Basic definitions}
\subsection{The split Casimir operator of a simple Lie algebra}

In this section, we recall some well-known facts about the quadratic and split Casimir operators of simple Lie algebras (see, e.g., \cite{IsKrad2,IsKrPrad3,IsRu1}).

Let $\mfg$ be a simple complex Lie algebra with basis $\{X_A\}$ and structure relations
\begin{equation}
[X_A,X_B]=X^{C}{}_{AB} X_C,
\end{equation}
where $X^{C}{}_{AB}$ are the structure constants. The Cartan--Killing metric $\kg_{AB}:=X^{C}{}_{AD}X^{D}{}_{BC}$ of $\mfg$ and the inverse metric $\okg^{AB}$, satisfying $\okg^{AB}\kg_{BC}=\delta^A_C$, allow one to define the quadratic Casimir operator
\begin{equation}\label{C_2 definition}
C_2:=\okg^{AB}X_AX_B\in\mcU(\mfg),
\end{equation}
which is a central element of the universal enveloping algebra $\mcU(\mfg)$. Consequently, in any representation $T$ of $\mfg$ it acts as an invariant operator: $[T(X_A),T(C_2)]=0$ for any $X_A$.

Let $V_\lambda$ be the space of an irreducible representation $T_\lambda$ of $\mfg$ with highest weight $\lambda$. Then the operator $T_\lambda(C_2)$ is proportional to the identity operator $I_{V_\lambda}$ on $V_\lambda$:
\begin{equation}
T_\lambda(C_2) = c_2^{\lambda} I_{V_\lambda},
\end{equation}
where $c_2^{T_\lambda} \equiv c_2^{\lambda}$ is the eigenvalue of $T_\lambda(C_2)$ in $V_\lambda$. For the adjoint representation $T_\lambda = \ad$, it follows immediately from~\eqref{C_2 definition} that $c_2^{\ad} = 1$. In the general case,
\begin{equation}\label{c_2 via weights}
c_2^{\lambda} = (\lambda, \lambda + 2\delta),
\end{equation}
where $\delta$ denotes the Weyl vector of $\mfg$, and the scalar product $(\ ,\ )$ in the root space is normalised by the condition $c_2^{\ad} = 1$ obtained above. For example, in the case of the Lie algebra $so_N$, which is the main object of study in this paper, this means that for arbitrary basis elements $e^{(i)}$ and $e^{(j)}$ of its root space one has
\begin{equation}\label{Root product norm}
(e^{(i)}, e^{(j)}) = \frac{1}{2(N-2)} \delta^{ij}.
\end{equation}

We now define the split Casimir operator of $\mfg$ by
\begin{equation}\label{CasDef}
\hC=\okg^{AB}X_A\otimes X_B\ \in\  \mcU(\mfg)\otimes \mcU(\mfg).
\end{equation}
It is related to the quadratic Casimir operator $C_2$ by
\begin{equation}\label{C_2 and C_(2) relation}
\Delta C_2=C_2\otimes I+I\otimes C_2+2\hC
\iff
\hC=\frac{1}{2}\bigl(\Delta C_2-C_2\otimes I-I\otimes C_2\bigr),
\end{equation}
where $I\in\mcU(\mfg)$ denotes the identity element and
$\Delta:\mcU(\mfg)\to \mcU(\mfg)\otimes \mcU(\mfg)$ is the comultiplication,
\begin{equation}
\Delta(I)=I\otimes I,\qquad
\Delta(X_A)=X_A\otimes I+I\otimes X_A.
\end{equation}
The operator $\hC$ is $\ad$-invariant. More precisely, for arbitrary representations
$T$ and $T'$ of $\mfg$ one has
\begin{equation}
\label{adinv}
[T(X_A)\otimes I_{T'}+I_T\otimes T'(X_A),\,\hC_{T\cdot T'}]=0
\qquad \forall X_A,
\end{equation}
where $I_T$ denotes the identity operator in the representation $T$, and
$\hC_{T\cdot T'}:=(T\otimes T')\hC$.

It follows from~\eqref{C_2 and C_(2) relation} that the eigenvalues
$c_{(2)}^{\lambda}$ and $c_{2}^{\lambda}$ of the split Casimir operator
$\hC$ and the quadratic Casimir operator $C_2$ of $\mfg$ in the representation
$T_\lambda$, appearing in the decomposition
$T\otimes T'=\bigoplus_{\lambda}T_\lambda$, are related by
\begin{equation}\label{Split via quadratic}
c_{(2)}^{\lambda}=\frac{1}{2}\bigl(c_{2}^{\lambda}-c_2^T-c_2^{T'}\bigr).
\end{equation}

The $\ad$-invariance of $\hC_{T\cdot T'}$ together with Schur's lemma implies that if
$T\otimes T'$ decomposes as a direct sum of irreducible representations as $T\otimes T'=\bigoplus_{\lambda}T_\lambda$, then the space $V_{\lambda}$ of each subrepresentation $T_\lambda$ is an eigenspace of $\hC_{T\cdot T'}$ with eigenvalue $c_{(2)}^{\lambda}$, and the following identity holds:
\begin{equation}\label{CharDef}
\prod_{c_{(2)}^{\lambda}}\bigl(\hC_{T\cdot T'}-c_{(2)}^{\lambda}\bigr)=0,
\end{equation}
where the product is taken over all distinct eigenvalues $c_{(2)}^{\lambda}$.
This identity is called the characteristic identity of $\hC_{T\cdot T'}$.
By construction, \eqref{CharDef} has the minimum possible degree among all polynomial relations satisfied by this operator.

Using \eqref{CharDef}, one constructs a system of orthogonal projectors
$\proj_{c_{(2)}^{\lambda}}$ onto the eigenspaces $\{V_{c_{(2)}^{\lambda}}\}$ of $\hC_{T\cdot T'}$\footnote{The spaces $V_{c_{(2)}^{\lambda}}$ are called Casimir eigenspaces of the representation $T\otimes T'$ and are, in general, direct sums of the spaces of irreducible representations $T_\lambda,\,T_{\lambda'},\dots$ in which $\hC_{T\cdot T'}$ acts with the same eigenvalue $c_{(2)}^{\lambda}=c_{(2)}^{\lambda'}=\dots$.}:
\begin{equation}\label{GenPr}
\proj_{c_{(2)}^{\lambda}}=
\prod_{c_{(2)}^{\rho}\neq c_{(2)}^{\lambda}}
\frac{\hC_{T\cdot T'}-c_{(2)}^{\rho}}
{c_{(2)}^{\lambda}-c_{(2)}^{\rho}}.
\end{equation}
These projectors can in turn be used to compute the dimensions $\dim V_{c_{(2)}^\lambda}=\tr\proj_{c_{(2)}^\lambda}$ of the corresponding eigenspaces and to express the operator $\hC_{T\cdot T'}$ in the form
\begin{equation}\label{Split Casimir expansion}
\hC_{T\cdot T'}=
\sum_{c_{(2)}^{\lambda}}
c_{(2)}^{\lambda}\,\proj_{c_{(2)}^{\lambda}},
\end{equation}
from which a simple formula for an arbitrary power $L$ of $\hC_{T\cdot T'}$ follows:
\begin{equation}\label{Split Casimir to the L expansion}
\hC_{T\cdot T'}^L=
\sum_{c_{(2)}^{\lambda}}
\bigl(c_{(2)}^{\lambda}\bigr)^L\,
\proj_{c_{(2)}^{\lambda}}.
\end{equation}
In~\eqref{Split Casimir expansion} and~\eqref{Split Casimir to the L expansion}, the summation is taken over all distinct eigenvalues $c_{(2)}^{\lambda}$.

The split Casimir operator admits a graphical interpretation in terms of Feynman diagrams in a non-Abelian gauge theory with Lie algebra $\mfg$. More precisely, its components $(\hC_{T\cdot T})^{i_1 i_2}{}_{j_1 j_2}$ in the representation $T\otimes T$ coincide with the colour factor of the diagram shown in Fig.~\ref{Split Casimir graphic representation}. This diagram corresponds to the interaction of two particles whose fields transform under the action of $\mfg$ in the representation $T$, by exchanging a gauge boson. For a theory with gauge group $\Spin(2r)$, this interpretation will be discussed in greater detail in Section~\ref{Sec: diagrams} (see also~\cite{IsPr:Colour}).

\begin{figure}
\centering
\begin{tikzpicture}

\begin{feynman}
\vertex(i2){$i_2$};
\vertex[below of=i2](i1){$i_1$};
\vertex[right of=i2,label=above:$T_A$](a2);
\vertex[right of=i1,label=below:$T_B$](a1);
\vertex[right of=a2](j2){$j_2$};
\vertex[right of=a1](j1){$j_1$};

\diagram{
(i2)--[fermion](a2)--[fermion](j2),
(i1)--[fermion](a1)--[fermion](j1),
(a1)--[boson, edge label'=$\okg^{AB}$](a2)
};
\end{feynman}

\node[above left of=i1,xshift=-1cm]{$(\hC_{T\cdot T})^{i_1i_2}_{\ j_1j_2}=$};
\end{tikzpicture}
\caption{Graphical interpretation of the operator $\hC_{T\cdot T}$. In the case of a self-dual representation $T$, the horizontal lines should be regarded as unoriented.}
\label{Split Casimir graphic representation}
\end{figure}

We also note that the split Casimir operator is used in constructing solutions to the quantum and quasiclassical Yang--Baxter equations (see, e.g., \cite{MacKay,ChaPr,IsQGLect}). This topic will be discussed in detail in Section~\ref{Sec: YBE} in connection with solutions to the quantum Yang--Baxter equation that are invariant under the action of the Lie algebra $so_{2r}$ in spinor representations.

\section{First approach to the derivation of characteristic identities for the split Casimir operator of the Lie algebra $so_{2r}$ in tensor products of spinor representations}

The necessary information about the Lie algebra $so_N$, the Clifford algebra $\Cl_N$, and their representations is collected in Appendix~\ref{app: so_N and Cl_N props}. For brevity, we omit explicit reference to the irreducible representation $\rho$ of $\Cl_{2r}$ in the formulas of this section and write $\Gamma_i$ instead of $\rho(\Gamma_i)$, $M_{ij}$ instead of $\rho(M_{ij})$, and so on.

\subsection{Characteristic identity of the split Casimir operator of the Lie algebra $so_{2r}$ in the representation $\rho\otimes \rho$}\label{Sec: 3.1}

We introduce elements $I_k$ of the algebra $\rho(\Cl_{2r})\otimes \rho(\Cl_{2r})$, $k=0,1,2,\dots$, defined by (see~\cite{IsKarKir})
\begin{equation}\label{Ik Def}
I_0:=I\otimes I,\qquad
I_k:=\Gamma^{[i_1\dots i_k]}\otimes \Gamma_{[i_1\dots i_k]},\quad k>0.
\end{equation}
It is straightforward to verify that they are invariant under the adjoint action of $so_N$, that is, they satisfy
\begin{equation}
[M_{ij}\otimes I+I\otimes M_{ij},I_k]=0.
\end{equation}
In particular,
\begin{equation}\label{I2 via hC}
I_2=\Gamma^{[i_1 i_2]}\otimes \Gamma_{[i_1 i_2]}
=-16(N-2)\hC_\rho,
\end{equation}
where we have used \eqref{hC via Gammas} and denoted $\hC_\rho:=(\rho\otimes \rho)\hC$. The proportionality between $I_2$ and the split Casimir operator $\hC_\rho$ of $so_{2r}$ in the representation $\rho\otimes \rho$ plays a key role in what follows.

In~\cite{IsKarKir}, a recurrence relation for the elements $I_k$ was obtained:
\begin{equation}\label{IkI1 relation}
I_k I_1 = I_{k+1} - k\bigl((k-1)-2r\bigr) I_{k-1}.
\end{equation}
By successive application of this relation, each invariant $I_k$ can be expressed as a polynomial in $I_1$ of degree $k$, for example:
\begin{equation}\label{I_2, I_3 via I_1 example}
I_2 = I_1^2 - 2r I_0,
\qquad
I_3 = I_1^3 - 2(3r-1) I_1.
\end{equation}

In what follows, we will need an analogue of~\eqref{IkI1 relation} for the even invariants $I_{2k}$:
\begin{equation}\label{I2kI2 relation}
I_{2k} I_2 = I_{2k+2}
+ 8k(r-k) I_{2k}
+ 4k(2k-1)(r+1-k)(2r+1-2k) I_{2k-2}.
\end{equation}
This relation is obtained by multiplying~\eqref{IkI1 relation} by $I_1$, substituting $k \mapsto 2k$, and eliminating $I_1$ using the first equality in~\eqref{I_2, I_3 via I_1 example}. It follows from~\eqref{I2 via hC} and~\eqref{I2kI2 relation} that each invariant $I_{2k}$ can be expressed as a polynomial in $\hC_\rho$ of degree $k$, for example:
\begin{equation}\label{I4, I6 example}
\begin{aligned}
I_4(\hC_\rho) &= 1024(r-1)^2 \hC_\rho^2
+ 256(r-1)^2 \hC_\rho
- 4r(2r-1) I_0,\\
I_6(\hC_\rho) &= -32768(r-1)^3 \hC_\rho^3
- 8192(r-1)^2(3r-5)\hC_\rho^2
- 128(r-1)(18r^2-65r+46)\hC_\rho \\
&\quad + 64r(r-2)(2r-1) I_0,\\
I_8(\hC_\rho) &= 2^{20}(r-1)^4 \hC_\rho^4
+ 2^{19}(r-1)^3(3r-7)\hC_\rho^3
+ 2^{13}(r-1)^2(66r^2-301r+308)\hC_\rho^2 \\
&\quad + 2^{11}(r-1)(10r^3-91r^2+217r-132)\hC_\rho
- 48r(r-2)(2r-1)(22r-71) I_0,\\
I_{10}(\hC_\rho) &= -2^{25}(r-1)^5 \hC_\rho^5
- 2^{24}\cdot 5(r-1)^4(r-3)\hC_\rho^4
- 2^{18}(r-1)^3(230r^2-1335r+1806)\hC_\rho^3 \\
&\quad - 2^{16}(r-1)^2(190 r^3-1665 r^2+4473 r-3590)\hC_\rho^2 \\
&\quad - 2^9(r-1)(140 r^4-5820 r^3+36351 r^2-72610 r+40536)\hC_\rho \\
&\quad + 2^9(r-2) r (2 r-9)(2 r-1)(19 r-62) I_0,\\
I_{12}(\hC_\rho) &= 2^{30}(r-1)^6 \hC_\rho^6
+ 2^{28}\cdot 5(r-1)^5(3 r-11)\hC_\rho^5
+ 2^{22}(r-1)^4(1170 r^2-8305 r+13992)\hC_\rho^4 \\
&\quad + 2^{21}(r-1)^3(1070 r^3-11165 r^2+36663 r-37400)\hC_\rho^3 \\
&\quad + 2^{14}(r-1)^2(18660 r^4-269060 r^3+1354221 r^2-2778930 r+1914616)\hC_\rho^2 \\
&\quad - 2^{12}(r-1)(1404 r^5-2200 r^4-105897 r^3+607695 r^2-1108426 r+585720)\hC_\rho \\
&\quad - 320 r(r-2)(2 r-9)(2 r-1)(622 r^2-5755 r+12172).
\end{aligned}
\end{equation}

Due to the antisymmetry of the basis elements $\Gamma_{[i_1\dots i_k]}$ of $\Cl_{2r}$ under permutations of the indices $i_1,\dots,i_k$, all the invariants $I_k$ with $k>2r$ vanish: $I_{2r+1}=I_{2r+2}=I_{2r+3}=\dots=0$. The first even invariant of this type that can be written as a polynomial in $\hC_\rho$ is $I_{2r+2}$. Accordingly, the following proposition holds:

\begin{proposition}\label{I_k's, Prop: char identities for hC_rho}
The split Casimir operator of $so_{2r}$ in the representation $\rho$ satisfies
\begin{equation}\label{I2 char iden}
I_{2r+2}(\hC_\rho)=0.
\end{equation}
Identity~\eqref{I2 char iden} is the characteristic identity of
$\hC_\rho$.
\end{proposition}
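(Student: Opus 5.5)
The plan has two parts: first show that the identity holds, then show that it has minimal degree, so that it coincides with \eqref{CharDef}.

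\textbf{The identity holds.} Iterating \eqref{I2kI2 relation} starting from $I_0$ and $I_2=-16(N-2)\hC_\rho$ defines, for every $k$, a polynomial $I_{2k}(\hC_\rho)$ of degree $k$ in $\hC_\rho$. Its leading coefficient is $(-16(N-2))^k=(-32(r-1))^k\neq 0$ for $r>1$. By induction on $k$ this polynomial equals the invariant $I_{2k}$ of \eqref{Ik Def}, since both sides obey the same three-term recurrence with the same initial data. For $k=r+1$ the operator $I_{2r+2}$ contains $\Gamma^{[i_1\dots i_{2r+2}]}$. This antisymmetrised product of more than $2r$ indices, each taking only $2r$ values, vanishes identically. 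Hence $I_{2r+2}(\hC_\rho)=0$, a polynomial identity of degree $r+1$.

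\textbf{Minimality, via the spectrum.} I would show that $\hC_\rho$ has exactly $r+1$ distinct eigenvalues, so no polynomial of lower degree annihilates it. Since $\rho=\Delta_+\oplus\Delta_-$ as an $so_{2r}$-module, $\rho\otimes\rho$ decomposes into the irreducible pieces $\Lambda^k V$ with $k=0,\dots,2r$; the middle power $\Lambda^r$ splits into self-dual and anti-self-dual parts, but these have equal Casimir. Equivalently, I would use the standard isomorphism $\rho\otimes\rho\cong\Cl_{2r}=\bigoplus_k\Lambda^k$.

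By \eqref{Split via quadratic} and \eqref{c_2 via weights} with the normalisation \eqref{Root product norm}, the eigenvalue on $\Lambda^k$ is
\[
c_{(2)}^{(k)}=\frac{1}{2}\Bigl(\frac{k(2r-k)}{2(2r-2)}-2c_2^{\rho}\Bigr).
\]
This depends on $k$ only through $k(2r-k)$, which is symmetric under $k\mapsto 2r-k$. Therefore the distinct values are attained for $k=0,1,\dots,r$, and these are $r+1$ pairwise distinct numbers because $k(2r-k)$ is strictly increasing on $[0,r]$. Thus the minimal polynomial of $\hC_\rho$ has degree at least $r+1$. It divides $I_{2r+2}(\hC_\rho)$, which has degree $r+1$, so the two coincide up to normalisation, and \eqref{I2 char iden} is the characteristic identity.

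\textbf{Main obstacle.} The delicate step is justifying cleanly that the minimal polynomial has degree exactly $r+1$, i.e.\ that all $r+1$ eigenvalues actually occur. The natural tool is the decomposition of $\rho\otimes\rho$ into the subspaces of the Clifford algebra. As a cross-check, which also pins down the factorisation explicitly, I would verify directly that on the subspace $\Lambda^k$ the operator $I_1$ acts by a scalar, and that $I_2=I_1^2-2rI_0$ from \eqref{I_2, I_3 via I_1 example} then reproduces the eigenvalues above.
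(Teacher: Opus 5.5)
Your proposal is correct and follows the paper's own route. The identity comes from the vanishing of antisymmetrised products with more than $2r$ indices together with the recurrence \eqref{I2kI2 relation}, and minimality comes from the existence of $r+1$ distinct eigenvalues $c_{(2),k}$, $k=0,\dots,r$. The paper establishes this in Section~4 by decomposing the four blocks $\Delta_\epsilon\otimes\Delta_{\epsilon'}$ into the $T_k$, whereas you use $\rho\otimes\rho\cong\Cl_{2r}\cong\bigoplus_k\Lambda^k$ in one step; you also state explicitly that the leading coefficient $(-32(r-1))^{r+1}$ is nonzero, so the polynomial has degree exactly $r+1$, which the paper leaves implicit.
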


\begin{remark}
Strictly speaking, the minimality of the degree of the polynomial
$I_{2r+2}(\hC_\rho)$ does not follow directly from the arguments given above.
This part of Proposition~\ref{I_k's, Prop: char identities for hC_rho} will be
proved later in Section~\ref{Sec: hC_ee' char idens second derivation} by an
alternative method.
\end{remark}

Since $\tr I_{2m}=0$ for $m\ge 1$, as follows from the definition~\eqref{Ik Def}, the successive application of the trace to identities~\eqref{I2kI2 relation} for
$k=1,2,3,\dots$ makes it possible to compute the values of
$\tr(\hC_\rho^{k+1})$. In particular,
\begin{equation}
\begin{aligned}
\tr \hC_\rho^2 &= \frac{r(2r-1)}{256(r-1)^2}\tr I_0,
& \tr \hC_\rho^3 &= -\frac{r(2r-1)}{1024(r-1)^2}\tr I_0,\\
\tr \hC_\rho^4 &= \frac{r(2r-1)(30r^2-63r+34)}{2^{16}(r-1)^4}\tr I_0,
& \tr \hC_\rho^5 &= -\frac{r(2r-1)(34r^2-89r+62)}{2^{17}(r-1)^4}\tr I_0.
\end{aligned}
\end{equation}
Since the projectors~\eqref{GenPr} onto invariant subspaces of $\hC_\rho$ are polynomials in $\hC_\rho$,
computing $\tr(\hC_\rho^k)$ allows one to determine the traces
$\tr \proj_{c_{(2)}^\lambda}$ of these projectors, which coincide with the dimensions of the corresponding subspaces.

As an illustration, we present the characteristic identities~\eqref{I2 char iden} for the operator $\hC_\rho$ in the cases of the Lie algebras $so_4$, $so_6$, $so_8$, and $so_{10}$, together with the dimensions of the corresponding invariant subspaces:
\begin{equation}
so_4:\qquad
\begin{gathered}
I_6|_{r=2}\sim\hC_\rho\Big(\hC_\rho-\frac{1}{8}\Big)\Big(\hC_\rho+\frac{3}{8}\Big)=0,\\ \dim V_0=8,\quad \dim V_{\frac{1}{8}}=6,\quad \dim V_{-\frac{3}{8}}=2,
\end{gathered}
\end{equation}
\begin{equation}
so_6:\qquad
\begin{gathered}
I_8|_{r=3}\sim\Big(\hC_\rho-\frac{1}{32}\Big)\Big(\hC_\rho-\frac{3}{32}\Big)\Big(\hC_\rho+\frac{5}{32}\Big)\Big(\hC_\rho+\frac{15}{32}\Big)=0,\\
\dim V_{\frac{1}{32}}=30,\quad \dim V_{\frac{3}{32}}=20,\quad \dim V_{-\frac{5}{32}}=12,\quad \dim V_{-\frac{15}{32}}=2.
\end{gathered}
\end{equation}
\begin{equation}
so_{8}:\qquad
\begin{gathered}
I_{10}|_{r=4}\sim \Big(\hC_\rho-\frac{1}{12}\Big)\Big(\hC_\rho-\frac{1}{24}\Big)\Big(\hC_\rho+\frac{1}{12}\Big)\Big(\hC_\rho+\frac{7}{24}\Big)\Big(\hC_\rho+\frac{7}{12}\Big)=0,\\
\dim V_{\frac{1}{12}}=70,\quad \dim V_{\frac{1}{24}}=112,\quad \dim V_{-\frac{1}{12}}=56,\quad \dim V_{-\frac{7}{24}}=16,\quad \dim V_{-\frac{7}{12}}=2.
\end{gathered}
\end{equation}
\begin{equation}
so_{10}:\qquad
\begin{gathered}
I_{12}|_{r=5}\sim \Big(\hC_\rho-\frac{5}{64}\Big)\Big(\hC_\rho-\frac{3}{64}\Big)\Big(\hC_\rho+\frac{3}{64}\Big)\Big(\hC_\rho+\frac{13}{64}\Big)\Big(\hC_\rho+\frac{27}{64}\Big)\Big(\hC_\rho+\frac{45}{64}\Big)=0,\\
\dim V_{\frac{5}{64}}=252,\quad \dim V_{\frac{3}{64}}=420,\quad \dim V_{-\frac{3}{64}}=240,\\
\dim V_{-\frac{13}{64}}=90,\quad \dim V_{-\frac{27}{64}}=20,\quad \dim V_{-\frac{45}{64}}=2.
\end{gathered}
\end{equation}

\subsection{Characteristic identities of the split Casimir operator of the Lie algebra $so_{2r}$ in the representations $\Delta_{\pm}\otimes \Delta_{\pm}$ and $\Delta_{\pm}\otimes \Delta_{\mp}$}

To derive the characteristic identities of the split Casimir operator in the tensor product of two spinor representations\footnote{The spinor representations of $so_{2r}$ of positive and negative chirality are denoted by $\Delta_+$ and $\Delta_-$, respectively; see Appendix~\ref{app: so_N and Cl_N props}.}
of the Lie algebra $so_{2r}$, we define the restriction of $\hC_\rho$ to the representation
$\Delta_{\epsilon}\otimes \Delta_{\epsilon'}$, $\epsilon,\epsilon'=\pm$,
by means of the projectors $\proj_\epsilon$ introduced in~\eqref{rho to Delta_pm projectors}:
\begin{equation}\label{hC_epsilon epsilon' def}
\hC_{\epsilon\epsilon'}:=
(\proj_\epsilon\otimes \proj_{\epsilon'})\hC_\rho
\equiv
\proj_{\epsilon\epsilon'}\hC_\rho,
\end{equation}
where we have set $\proj_{\epsilon\epsilon'}:=\proj_\epsilon\otimes \proj_{\epsilon'}$.

Since the projectors $\proj_+$ and $\proj_-$ are mutually orthogonal and invariant under the action of $so_{2r}$ in the representation $\rho$, one has $\proj_{\epsilon\epsilon'}\hC_\rho^k = (\proj_{\epsilon\epsilon'}\hC_\rho)^k = \hC_{\epsilon\epsilon'}^k$, so that for any polynomial $J(\hC_\rho)$, $\proj_{\epsilon\epsilon'} J(\hC_\rho) = J(\hC_{\epsilon\epsilon'})$. Multiplying~\eqref{I2 char iden} by $\proj_{\epsilon\epsilon'}$ therefore yields
\begin{equation}\label{I2 in Deltas init iden}
I_{2r+2}(\hC_{\epsilon\epsilon'})=0.
\end{equation}
However, identity~\eqref{I2 in Deltas init iden} is not the characteristic identity of $\hC_{\epsilon\epsilon'}$, since it is not of minimal degree. Indeed, compared to the case of the representation $\rho$, the polynomials $I_{2k}(\hC_{\epsilon\epsilon'})$ satisfy additional algebraic relations. To derive these relations, we will need the following lemma.

\begin{lemma}\label{Lemma 1}
The invariants $I_k$ satisfy
\begin{equation}\label{(I otimes Gamma)I_k}
(I\otimes \Gamma_{2r+1})\frac{I_k}{k!}
=
(-1)^r(\Gamma_{2r+1}\otimes I)
\frac{I_{2r-k}}{(2r-k)!}.
\end{equation}
\end{lemma}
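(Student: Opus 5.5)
The identity will follow from Hodge duality in the Clifford algebra: multiplying an antisymmetrised product of $k$ gamma matrices by the chirality element $\Gamma_{2r+1}$ gives, up to a sign, the antisymmetrised product over the complementary $2r-k$ indices contracted with the Levi-Civita tensor. I take the normalisation of $\Gamma_{2r+1}$ from Appendix~\ref{app: so_N and Cl_N props}, namely $\Gamma_{2r+1}\propto\Gamma_1\Gamma_2\cdots\Gamma_{2r}$, with $\Gamma_{2r+1}^2=I$ and the metric $\delta_{ij}$, so indices are raised and lowered freely. The first step is the duality formula
\begin{equation*}
\Gamma_{2r+1}\Gamma_{[i_1\dots i_k]}=\frac{c_k}{(2r-k)!}\,\varepsilon_{i_1\dots i_k j_1\dots j_{2r-k}}\Gamma^{[j_1\dots j_{2r-k}]},
\end{equation*}
where $c_k$ is a sign or phase depending on $k$, $r$ and the chosen phase of $\Gamma_{2r+1}$. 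I will prove it for a single index set $i_1<\dots<i_k$, with all indices distinct. In that case $\Gamma_{[i_1\dots i_k]}=\Gamma_{i_1}\cdots\Gamma_{i_k}$, so the left side is $\Gamma_1\cdots\Gamma_{2r}\,\Gamma_{i_1}\cdots\Gamma_{i_k}$ up to the normalising phase. Anticommuting the $\Gamma_{i_s}$ to the left through the full product and using $\Gamma_i^2=1$ leaves the product of the complementary gammas. The sign produced can be tracked explicitly.

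The second step substitutes this formula into $I_k$, but only in the second tensor factor:
\begin{equation*}
(I\otimes\Gamma_{2r+1})I_k=\frac{c_k}{(2r-k)!}\,\varepsilon_{i_1\dots i_k j_1\dots j_{2r-k}}\,\Gamma^{[i_1\dots i_k]}\otimes\Gamma^{[j_1\dots j_{2r-k}]}.
\end{equation*}
The third step reads the duality formula in the opposite direction, with $k$ replaced by $2r-k$ and acting on the first factor. Contracting $\varepsilon$ with $\Gamma^{[i_1\dots i_k]}$ produces $(2r-(2r-k))!=k!$ times $c_{2r-k}$ times $\Gamma_{2r+1}\Gamma_{[j_1\dots j_{2r-k}]}$, after the $\varepsilon$ indices are reordered from $(i,j)$ to $(j,i)$. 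That reordering costs the sign $(-1)^{k(2r-k)}=(-1)^k$. Collecting factors gives
\begin{equation*}
(I\otimes\Gamma_{2r+1})\frac{I_k}{k!}=c_k\,c_{2r-k}^{-1}(-1)^k\,(\Gamma_{2r+1}\otimes I)\frac{I_{2r-k}}{(2r-k)!}.
\end{equation*}
Here I use $c_{2r-k}^{-1}$, obtained by inverting the duality formula with $\Gamma_{2r+1}^2=I$. It also has to be checked that the overall normalisation of $\varepsilon\varepsilon$ contractions gives exactly $k!(2r-k)!$ and not a multiple of it. This is routine: each unordered pair of complementary index sets appears $k!\,(2r-k)!$ times in the double sum.

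The main obstacle is the sign bookkeeping, that is, showing $c_k\,c_{2r-k}^{-1}(-1)^k=(-1)^r$ independently of $k$ and of the phase convention for $\Gamma_{2r+1}$. I would compute $c_k$ concretely as $\eta\,(-1)^{k(k-1)/2}$ times a $k$-dependent sign, where $\eta$ is the phase in $\Gamma_{2r+1}=\eta\,\Gamma_1\cdots\Gamma_{2r}$. Under $k\mapsto 2r-k$ the phase $\eta$ cancels in the ratio. What remains is $(-1)^{k(k-1)/2-(2r-k)(2r-k-1)/2+k}$, and this exponent reduces mod $2$ to $r$. If the direct computation becomes messy, I will fall back on the case $k=0$, where the claim reads $I\otimes\Gamma_{2r+1}=(-1)^r(\Gamma_{2r+1}\otimes I)I_{2r}/(2r)!$. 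This case can be checked directly from $I_{2r}=(2r)!\,(\Gamma_1\cdots\Gamma_{2r})^{\text{rev-order}}\otimes(\Gamma_1\cdots\Gamma_{2r})$. I would then use the recurrence~\eqref{IkI1 relation} to propagate the sign to general $k$.
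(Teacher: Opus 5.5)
Your main argument is correct and is essentially the paper's proof. The paper applies the Hodge-duality formula for $\Gamma_{2r+1}$ times $\Gamma_{[i_1\dots i_k]}$ to the second tensor factor, reorders $\varepsilon$ at the cost of $(-1)^{k(2r-k)}$, applies the formula in reverse on the first factor, and closes with $(-1)^{[k/2]+[(2r-k)/2]}=(-1)^{r+k}$; the only differences are that it cites the formula (with $\Gamma_{2r+1}$ on the right and phase $(-i)^r(-1)^{[k/2]}$) where you derive it, and that this sign identity is exactly your exponent computation. One small slip, in your fallback only: $I_{2r}=(2r)!\,\Gamma_1\cdots\Gamma_{2r}\otimes\Gamma_1\cdots\Gamma_{2r}$ has the same ordering in both factors, and reversing the first factor would introduce a spurious $(-1)^r$ that breaks the $k=0$ check for odd $r$, but your main line does not need the fallback.
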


\begin{proof}
The basis elements $\Gamma_{[i_1\dots i_k]}$ of $\Cl_{2r}$ satisfy the following identity upon multiplication by $\Gamma_{2r+1}$ (see, e.g.,~\cite{IsRu2}):
\begin{equation}\label{Gamma_N+1 product identity}
\Gamma_{[i_1\dots i_k]}\Gamma_{2r+1}
=
(-i)^r(-1)^{[\frac{k}{2}]}
\frac{1}{(2r-k)!}
\varepsilon_{i_1\dots i_k i_{k+1}\dots i_{2r}}
\Gamma^{[i_{k+1}\dots i_{2r}]},
\end{equation}
where $[\frac{k}{2}]$ denotes the integer part of $\frac{k}{2}$, and
$\varepsilon_{i_1\dots i_{2r}}$ is the totally antisymmetric tensor with
$\varepsilon_{1,2,\dots,2r}=1$.

Using~\eqref{Gamma_N+1 product identity}, the left-hand side of~\eqref{(I otimes Gamma)I_k} can be rewritten as
\begin{equation}
\begin{aligned}
(I\otimes \Gamma_{2r+1})\frac{I_k}{k!} &= \frac{1}{k!} \Gamma^{[i_1\dots i_k]} \otimes \Gamma_{2r+1}\Gamma_{[i_1\dots i_k]} \\
&= \frac{1}{k!(2r-k)!} (-i)^r(-1)^{[\frac{k}{2}] + k} \varepsilon_{i_1\dots i_{2r}} \Gamma^{[i_1\dots i_k]} \otimes \Gamma^{[i_{k+1}\dots i_{2r}]} \\
&= \frac{1}{k!(2r-k)!} (-i)^r(-1)^{[\frac{k}{2}] + k + k(2r-k)} \varepsilon_{i_{k+1}\dots i_{2r} i_1\dots i_k} \Gamma^{[i_1\dots i_k]} \otimes \Gamma^{[i_{k+1}\dots i_{2r}]} \\
&= \frac{1}{(2r-k)!} (-1)^{[\frac{k}{2}] + k + k(2r-k) + [\frac{2r-k}{2}] + 2r - k} \Gamma_{2r+1} \Gamma_{[i_{k+1}\dots i_{2r}]} \otimes \Gamma^{[i_{k+1}\dots i_{2r}]}.
\end{aligned}
\end{equation}
Using the identity
\begin{equation}
(-1)^{[\frac{k}{2}] + [\frac{2r-k}{2}]} = (-1)^{r+k},
\end{equation}
we obtain~\eqref{(I otimes Gamma)I_k}.
\end{proof}

Since the projectors $\proj_{\epsilon\epsilon'}$ are constructed from the operators
$I\otimes \Gamma_{2r+1}$ and $\Gamma_{2r+1}\otimes I$, Lemma~\ref{Lemma 1}
allows one to prove the following.

\begin{proposition}
The polynomials $I_{2k}(\hC_{\epsilon\epsilon'})$ satisfy
\begin{equation}\label{I_k's: hC possible idens}
I_{2r-2k}(\hC_{\epsilon\epsilon'})-\epsilon\epsilon' \frac{(2r-2k)!}{(2k)!}I_{2k}(\hC_{\epsilon\epsilon'})=0,\quad k=0,1,\dots,2r.
\end{equation}
\end{proposition}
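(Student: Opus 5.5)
The plan is to apply Lemma~\ref{Lemma 1} to the even invariant $I_{2k}$ and then multiply by the projector $\proj_{\epsilon\epsilon'}$. The chirality projectors from~\eqref{rho to Delta_pm projectors} are presumably of the form $\proj_\epsilon=\frac12(I+\epsilon\, c\,\Gamma_{2r+1})$, with a normalisation constant $c$ such that $(c\Gamma_{2r+1})^2=I$. Hence $\proj_\epsilon (c\Gamma_{2r+1})=\epsilon\proj_\epsilon$. The first step is to fix this normalisation precisely: either $c=1$ or $c=i^{r}$ (or $(-i)^r$), depending on the appendix conventions.

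Next, Lemma~\ref{Lemma 1} with $k\mapsto 2k$ gives
\[
(I\otimes \Gamma_{2r+1})\frac{I_{2k}}{(2k)!}=(-1)^r(\Gamma_{2r+1}\otimes I)\frac{I_{2r-2k}}{(2r-2k)!}.
\]
Since $\Gamma_{2r+1}$ is invertible, we multiply on the left by $(\Gamma_{2r+1}\otimes I)^{-1}$. The invariants are even in the Clifford grading, so $\Gamma_{2r+1}$ commutes with $I_{2k}$ factorwise, although we only need left multiplication here. Then we multiply by $\proj_{\epsilon\epsilon'}$, which commutes with $\Gamma_{2r+1}\otimes I$ and $I\otimes\Gamma_{2r+1}$. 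This produces a scalar factor $\epsilon'\epsilon^{-1}\,(\text{normalisation})$ in front of $\proj_{\epsilon\epsilon'}I_{2k}$.

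The main obstacle is the bookkeeping of the constants. We have $(\Gamma_{2r+1}\otimes I)^{-1}(I\otimes\Gamma_{2r+1})$ acting on $\Delta_\epsilon\otimes\Delta_{\epsilon'}$, and the normalisation $c$ cancels between the two factors because it appears once in each. The ratio is therefore $\epsilon\epsilon'$, using $\epsilon^{-1}=\epsilon$. We also need $\Gamma_{2r+1}^{-1}=\Gamma_{2r+1}$ or $(-1)^r\Gamma_{2r+1}$, depending on whether $\Gamma_{2r+1}^2=I$ or $(-1)^r I$ in the paper's convention. This potential $(-1)^r$ must cancel the $(-1)^r$ in the lemma, and I will check it against the appendix definition of $\Gamma_{2r+1}$. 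Assuming it does, we obtain
\[
\proj_{\epsilon\epsilon'}I_{2r-2k}=\epsilon\epsilon'\frac{(2r-2k)!}{(2k)!}\proj_{\epsilon\epsilon'}I_{2k}.
\]
As a sanity check, at $k=0$ this reads $\proj_{\epsilon\epsilon'}I_{2r}=\epsilon\epsilon'(2r)!\,\proj_{\epsilon\epsilon'}$, which I would verify independently. On an irreducible $\rho$, $I_{2r}/(2r)!$ equals $\pm\Gamma_{2r+1}\otimes\Gamma_{2r+1}$ up to a normalisation, and its sign pins down the conventions.

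Finally, $I_{2k}$ is a polynomial $I_{2k}(\hC_\rho)$ in $\hC_\rho$ by~\eqref{I2 via hC} and~\eqref{I2kI2 relation}. We also have $\proj_{\epsilon\epsilon'}J(\hC_\rho)=J(\hC_{\epsilon\epsilon'})$, as noted before~\eqref{I2 in Deltas init iden}. Substituting these into the relation above yields~\eqref{I_k's: hC possible idens}. For $k>r$, both sides involve $I_{2r-2k}$ with a negative index or factorials of negative numbers. These cases are either vacuous or follow by reading the relation with $k\mapsto r-k$, since $I_m=0$ for $m>2r$. I would remark that only $k=0,\dots,r$ carry content, and that the identities for $k$ and $r-k$ are equivalent.
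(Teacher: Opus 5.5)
Your strategy is the paper's: apply Lemma~\ref{Lemma 1}, multiply by $\proj_{\epsilon\epsilon'}$, and rewrite $\proj_{\epsilon\epsilon'}I_{2k}$ as polynomials in $\hC_{\epsilon\epsilon'}$. Inverting $\Gamma_{2r+1}\otimes I$, instead of expanding $\proj_{\epsilon\epsilon'}$ into four terms as in \eqref{proj_ee' I_k}, is a harmless shortcut.

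The gap is exactly the step you flagged and then assumed away. In the paper's conventions $\Gamma_{2r+1}^2=I$ (see \eqref{Gamma_N+1 properties}). Hence $(\Gamma_{2r+1}\otimes I)^{-1}=\Gamma_{2r+1}\otimes I$, and nothing absorbs the factor $(-1)^r$ in \eqref{(I otimes Gamma)I_k}. A different normalisation cannot help either. Both sides of \eqref{(I otimes Gamma)I_k} are linear in $\Gamma_{2r+1}$, so the sign $(-1)^r$ does not depend on how $\Gamma_{2r+1}$ is scaled. Meanwhile $(\Gamma_{2r+1}\otimes I)^{-1}(I\otimes\Gamma_{2r+1})$ is scale-invariant and equals $\epsilon\epsilon'$ on $\Delta_\epsilon\otimes\Delta_{\epsilon'}$. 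You cannot spend the scaling freedom once to get $\epsilon\epsilon'$ and again to cancel a sign. Carried out honestly, your argument yields
\[
\proj_{\epsilon\epsilon'}I_{2r-2k}=(-1)^r\,\epsilon\epsilon'\,\frac{(2r-2k)!}{(2k)!}\,\proj_{\epsilon\epsilon'}I_{2k}.
\]
This is also precisely what the last line of the paper's own chain \eqref{proj_ee' I_k} produces: the factor there is $\epsilon\epsilon'(-1)^r$.

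This agrees with the displayed identity only for even $r$. For odd $r$ the printed sign is wrong, so no bookkeeping can close your gap.

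Your own $k=0$ test decides it. We have $I_{2r}=(2r)!\,(\Gamma_1\cdots\Gamma_{2r})\otimes(\Gamma_1\cdots\Gamma_{2r})=(-1)^r(2r)!\,\Gamma_{2r+1}\otimes\Gamma_{2r+1}$. Hence $\proj_{\epsilon\epsilon'}I_{2r}=(-1)^r\epsilon\epsilon'(2r)!\,\proj_{\epsilon\epsilon'}$. The sign is intrinsic: replacing $\Gamma_{2r+1}$ by $-\Gamma_{2r+1}$ swaps $\Delta_\pm$ but leaves $\epsilon\epsilon'$ unchanged.

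A spectral check for $r=3$ confirms this:
\begin{itemize}
\item Writing $x=I_2$, the recursion gives $I_4=x^2-16x-60$. So $I_4-12I_2=(x-30)(x+2)$ and $I_4+12I_2=(x-10)(x+6)$.
\item Since $\oDelta_\pm=\Delta_\mp$, we have $\Delta_\epsilon\otimes\Delta_{-\epsilon}=T_0\oplus T_2$.
\item On this space $I_2$ takes the values $30$ and $-2$, by \eqref{I2 via hC} and \eqref{hC eigenvalues on Delta times Delta}.
\item So $I_4-12I_2$ vanishes there, not $I_4+12I_2$ as the printed formula with $\epsilon\epsilon'=-1$ demands.
\end{itemize}

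The correct statement, and the one your method actually proves, has $(-1)^r\epsilon\epsilon'$ in place of $\epsilon\epsilon'$. The dropped sign also explains an inconsistency elsewhere in the paper. For odd $r$, Proposition~\ref{I_k hC_epsilon,epsilon' char idens} and the $so_6$, $so_{10}$ examples assign to $\hC_{\epsilon,-\epsilon}$ and $\hC_{\epsilon\epsilon}$ the opposite spectra from those in Proposition~\ref{Prop: char identities for hC}. Your remark that only $k=0,\dots,r$ carries content is correct.
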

\begin{proof}
For any $k=0,1,\dots,2r$, consider the following chain of equalities:
\begin{equation}\label{proj_ee' I_k}
\begin{aligned}
\proj_{\epsilon\epsilon'}\frac{I_k}{k!}&=\frac{1}{4}\big((I+\epsilon \Gamma_{{2r}+1})\otimes (I+\epsilon' \Gamma_{{2r}+1})\big)\frac{I_k}{k!}\\
&=\frac{1}{4k!}(I\otimes I)I_k+\frac{1}{4k!}\epsilon(\Gamma_{{2r}+1}\otimes I)I_k+\frac{1}{4k!}\epsilon'(I\otimes \Gamma_{{2r}+1})I_k+\frac{1}{4k!}\epsilon\epsilon'(\Gamma_{{2r}+1}\otimes \Gamma_{{2r}+1})I_k\\
&=\frac{(-1)^r}{4({2r}-k)!}(\Gamma_{{2r}+1}\otimes \Gamma_{{2r}+1})I_{{2r}-k}+\frac{(-1)^r}{4({2r}-k)!}\epsilon (I\otimes \Gamma_{{2r}+1})I_{{2r}-k}+\\
&+\frac{(-1)^r}{4({2r}-k)!}\epsilon'(\Gamma_{{2r}+1}\otimes I)I_{{2r}-k}+\frac{(-1)^r}{4({2r}-k)!}\epsilon\epsilon' (I\otimes I)I_{{2r}-k}\\
&=(-1)^r\frac{1}{4}\big((\epsilon I+\Gamma_{{2r}+1})\otimes (\epsilon' I+\Gamma_{{2r}+1})\big)\frac{I_{{2r}-k}}{({2r}-k)!}=\epsilon\epsilon' (-1)^r\proj_{\epsilon\epsilon'}\frac{I_{{2r}-k}}{({2r}-k)!}.
\end{aligned}
\end{equation}
Here we have used the involutive property~\eqref{Gamma_N+1 properties}
of $\Gamma_{2r+1}$ and the multiplication rule~\eqref{(I otimes Gamma)I_k}
for $(I\otimes \Gamma_{2r+1})$ and $(\Gamma_{2r+1}\otimes I)$ acting on $I_k$.
Then~\eqref{I_k's: hC possible idens} follows from~\eqref{proj_ee' I_k} by substituting $k\mapsto 2k$ and expressing the invariants
$\proj_{\epsilon\epsilon'} I_{2k}$ and $\proj_{\epsilon\epsilon'} I_{2r-2k}$
as polynomials in $\hC_{\epsilon\epsilon'}$
using~\eqref{I2 via hC} and~\eqref{I2kI2 relation}.
\end{proof}

It is easy to see that the polynomials on the left-hand side of
\eqref{I_k's: hC possible idens} have a smaller degree than the polynomial
in \eqref{I2 in Deltas init iden} for all $k=0,\dots,[\frac{r}{2}]$.
It is therefore natural to assume that the characteristic identity of
$\hC_{\epsilon\epsilon'}$ is given by a relation of the form
\eqref{I_k's: hC possible idens} of minimal possible degree.

\begin{proposition}\label{I_k hC_epsilon,epsilon' char idens}
For even $r$, the operators $\hC_{\epsilon,-\epsilon}$ and
$\hC_{\epsilon,\epsilon}$, $\epsilon=\pm$, satisfy the characteristic
identities
\begin{align}
&\hspace{1.8cm}
I_r(\hC_{\epsilon,-\epsilon})=0, & & I_{r+2}(\hC_{\epsilon\epsilon}) - r(r^2-1)(r+2) I_{r-2}(\hC_{\epsilon\epsilon}) =0 \label{I_k's: hC_(alpha,pm alpha) for even nu} \\
\intertext{of degrees $\frac{r}{2}$ and $\frac{r}{2}+1$, respectively,
and for odd $r$ the identities}
& I_{r+1}(\hC_{\epsilon,-\epsilon}) + r(r+1)I_{r-1}(\hC_{\epsilon,-\epsilon}) =0, & & I_{r+1}(\hC_{\epsilon\epsilon}) - r(r+1)I_{r-1}(\hC_{\epsilon\epsilon}) =0 \label{I_k's: hC_(alpha,pm alpha) for odd nu}
\end{align}
of degree $\frac{r+1}{2}$.
\end{proposition}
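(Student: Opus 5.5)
The plan is to read off each identity as a suitable special case of the preceding Proposition, and then to prove minimality separately by counting the distinct eigenvalues of $\hC_{\epsilon\epsilon'}$.

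\emph{Existence of the identities.} I would use the chain \eqref{proj_ee' I_k} directly with $k\mapsto 2k$, rather than its compressed form \eqref{I_k's: hC possible idens}, so that the factor $(-1)^r$ is kept explicit. This gives
$$\proj_{\epsilon\epsilon'} I_{2r-2k} = \epsilon\epsilon'(-1)^r\frac{(2r-2k)!}{(2k)!}\,\proj_{\epsilon\epsilon'}I_{2k}.$$
\begin{itemize}
\item For even $r$ with $2k=r$, this reads $(1-\epsilon\epsilon')\proj_{\epsilon\epsilon'}I_r=0$. It is empty for $\epsilon'=\epsilon$, but for $\epsilon'=-\epsilon$ it gives $I_r(\hC_{\epsilon,-\epsilon})=0$.
\item For even $r$ and $\epsilon'=\epsilon$, I take $2k=r-2$. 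This gives $I_{r+2}=\frac{(r+2)!}{(r-2)!}I_{r-2}$, and $\frac{(r+2)!}{(r-2)!}=r(r^2-1)(r+2)$.
\item For odd $r$ I take $2k=r-1$. The ratio of factorials is then $r(r+1)$, and the sign $\epsilon\epsilon'(-1)^r=-\epsilon\epsilon'$ yields the two stated relations.
\end{itemize}
At this point I would double-check the sign convention for $\Gamma_{2r+1}$ and for $\proj_\pm$ in the Appendix, to make sure the $+$ sign lands on $\hC_{\epsilon,-\epsilon}$. Finally, by \eqref{I2 via hC} and \eqref{I2kI2 relation}, $I_{2m}$ is a polynomial in $\hC$ of exact degree $m$, with leading coefficient $(-16(2r-2))^m\neq 0$. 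Hence the stated identities have degrees $\frac r2$, $\frac r2+1$ and $\frac{r+1}{2}$, and their leading coefficients do not vanish.

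\emph{Minimality.} A nonzero polynomial of degree $d$ that annihilates a diagonalisable operator with exactly $d$ distinct eigenvalues is, up to a scalar, that operator's minimal polynomial. It therefore suffices to show that $\hC_{\epsilon\epsilon'}$ has exactly as many distinct eigenvalues as the degree claimed. For this I use the standard decompositions into irreducible $so_{2r}$-modules realised on antisymmetric tensors $\Lambda^m$ of the vector representation:
\begin{itemize}
\item $\Delta_\epsilon\otimes\Delta_{\epsilon'}=\bigoplus_j \Lambda^{r-2j}$, where $j=0,1,\dots$ and the top component $\Lambda^r$ is replaced by its self-dual or anti-self-dual half;
\item for even $r$: $\Delta_\epsilon\otimes\Delta_\epsilon$ contains $\Lambda^r_\pm,\Lambda^{r-2},\dots,\Lambda^0$, which is $\frac r2+1$ pieces;
\item for even $r$: $\Delta_\epsilon\otimes\Delta_{-\epsilon}$ contains $\Lambda^{r-1},\Lambda^{r-3},\dots,\Lambda^1$, which is $\frac r2$ pieces;
\item for odd $r$: each product contains $\frac{r+1}{2}$ pieces, with the parities of $m$ interchanged.
\end{itemize}
These decompositions can be justified from the $\proj_{\epsilon\epsilon'}$-images of the spaces spanned by $\Gamma^{[i_1\dots i_m]}$ via Fierz completeness, or by citing the standard result.

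Next, by \eqref{Split via quadratic} and \eqref{c_2 via weights} with the normalisation \eqref{Root product norm}, the eigenvalue on $\Lambda^m$ is $c_{(2)}=\frac{m(2r-m)}{4(2r-2)}-\frac{2r(2r-1)}{16(2r-2)}$. This is strictly increasing in $m$ for $0\le m\le r$, so the eigenvalues are pairwise distinct. The count matches the degree of each identity, which gives minimality and also settles the Remark concerning Proposition \ref{I_k's, Prop: char identities for hC_rho}.

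The main obstacle is minimality: justifying the decomposition of $\Delta_\epsilon\otimes\Delta_{\epsilon'}$ cleanly and verifying that the eigenvalues are distinct. Tracking the sign $(-1)^r$ against the chirality conventions is a secondary but error-prone point.
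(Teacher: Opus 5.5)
Your even-$r$ derivation is correct. Your minimality argument is also sound: the degree equals the number of distinct eigenvalues $c_{(2),k}$, and these are distinct because $k(2r-k)$ is increasing on $0\le k\le r$. This is essentially what the paper supplies only later, by comparison with Proposition~\ref{Prop: char identities for hC}.

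The odd-$r$ step, however, does not give what you claim. You correctly keep the sign $\epsilon\epsilon'(-1)^r=-\epsilon\epsilon'$ from \eqref{proj_ee' I_k}. With $2k=r-1$ this gives
\[
\proj_{\epsilon\epsilon'}I_{r+1}=-\epsilon\epsilon'\,r(r+1)\,\proj_{\epsilon\epsilon'}I_{r-1}.
\]
So you get $I_{r+1}(\hC_{\epsilon,-\epsilon})-r(r+1)I_{r-1}(\hC_{\epsilon,-\epsilon})=0$ and $I_{r+1}(\hC_{\epsilon\epsilon})+r(r+1)I_{r-1}(\hC_{\epsilon\epsilon})=0$. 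These have the opposite signs to the statement. Your sentence ``yields the two stated relations'' is therefore false, exactly at the point you flagged as error-prone.

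The fault lies with the statement, not with your sign. For $r=3$ one has $I_2=-64\hC$ and $I_4=4096\hC^2+1024\hC-60$, so
\[
I_4+12I_2=4(32\hC-3)(32\hC+5),\qquad I_4-12I_2=4(32\hC-1)(32\hC+15).
\]
The roots $\frac{3}{32}$ and $-\frac{5}{32}$ are $c_{(2),3}$ and $c_{(2),1}$, the eigenvalues on $T_3^\pm$ and $T_1$. These occur in $\Delta_\epsilon\otimes\Delta_\epsilon$, since $4\otimes 4=6\oplus 10$ for $so_6\cong sl_4$. By contrast, $\Delta_\epsilon\otimes\Delta_{-\epsilon}\cong 4\otimes\bar 4=1\oplus 15$ carries the eigenvalues $-\frac{15}{32}$ and $\frac{1}{32}$. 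In particular, the stated identity for $\hC_{\epsilon,-\epsilon}$ fails on the singlet.

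The paper arrives at the stated form because \eqref{I_k's: hC possible idens} silently drops the factor $(-1)^r$ from \eqref{proj_ee' I_k}. That factor is genuinely there, as you can see already at $k=0$:
\[
I_{2r}=(2r)!\,(\Gamma_1\cdots\Gamma_{2r})^{\otimes 2}=(-1)^r(2r)!\,\Gamma_{2r+1}\otimes\Gamma_{2r+1}.
\]
Dropping it is harmless for even $r$, but for odd $r$ it interchanges the two chirality sectors. The $so_6$ and $so_{10}$ eigenvalue lists in Section~3.2 are swapped in the same way relative to Proposition~\ref{Proposition 6.4.1 from IsRu2}.

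Your minimality check could not catch this, because it only matches degrees against eigenvalue counts. Checking that the roots of each polynomial are the $c_{(2),k}$ of the corresponding sector would have caught it. The correct odd-$r$ conclusion has the signs interchanged, or equivalently $\hC_{\epsilon,-\epsilon}$ and $\hC_{\epsilon\epsilon}$ swapped.
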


\begin{proof}
In the case of even $r$ and $\epsilon=-\epsilon'$, the polynomial on
the left-hand side of \eqref{I_k's: hC possible idens} has the minimum degree
$\tfrac{r}{2}$ achieved at $k=\tfrac{r}{2}$. Substituting this value
into \eqref{I_k's: hC possible idens} yields the first identity in
\eqref{I_k's: hC_(alpha,pm alpha) for even nu}. For $\epsilon=\epsilon'$
the substitution $k=\tfrac{r}{2}$ makes the left-hand side of
\eqref{I_k's: hC possible idens} vanish identically. Hence the minimal
degree in this case is $\tfrac{r}{2}+1$, attained at
$k=\tfrac{r}{2}-1$, which gives the second identity in
\eqref{I_k's: hC_(alpha,pm alpha) for even nu}.

For odd $r$, the polynomial on the left-hand side of
\eqref{I_k's: hC possible idens} has the minimum degree
$\tfrac{r+1}{2}$ attained at $k=\tfrac{r-1}{2}$ independently of the
values of $\epsilon$ and $\epsilon'$. Substituting this value of $k$
into \eqref{I_k's: hC possible idens} yields
\eqref{I_k's: hC_(alpha,pm alpha) for odd nu}.
\end{proof}

\begin{remark}
As in the case of the representation $\rho$, the arguments presented in this section do not prove that for $\hC_{\epsilon\epsilon'}$ there exists no identity of degree lower than that of
\eqref{I_k's: hC_(alpha,pm alpha) for even nu} and
\eqref{I_k's: hC_(alpha,pm alpha) for odd nu}.
Justification of the minimality of the polynomials in
\eqref{I_k's: hC_(alpha,pm alpha) for even nu}
will be given later in Section~\ref{Sec: hC_ee' char idens second derivation}.
\end{remark}

As an illustration, we present the characteristic identities
\eqref{I_k's: hC_(alpha,pm alpha) for even nu} and
\eqref{I_k's: hC_(alpha,pm alpha) for odd nu}
for the operators $\hC_{\epsilon\epsilon'}$ in the cases of the Lie algebras
$so_4$, $so_6$, $so_8$, and $so_{10}$,
together with the dimensions of the corresponding eigenspaces:

\begin{equation}
so_4:\qquad
\begin{aligned}
&\hC_{\epsilon,-\epsilon}=0,& \qquad \qquad &\Big(\hC_{\epsilon\epsilon}-\frac{1}{8}\Big)\Big(\hC_{\epsilon\epsilon}+\frac{3}{8}\Big)=0,\\
&\dim V_0^{\epsilon,-\epsilon}=4,& &\dim V_{\frac{1}{8}}^{\epsilon\epsilon}=3,\qquad \dim V_{-\frac{3}{8}}^{\epsilon\epsilon}=1,
\end{aligned}
\end{equation}
\begin{equation}
so_6:\qquad
\begin{aligned}
&\Big(\hC_{\epsilon,-\epsilon}-\frac{3}{32}\Big)\Big(\hC_{\epsilon,-\epsilon}+\frac{5}{32}\Big)=0, &\qquad& \Big(\hC_{\epsilon\epsilon}-\frac{1}{32}\Big)\Big(\hC_{\epsilon\epsilon}+\frac{15}{32}\Big)=0,\\
&\dim V^{\epsilon,-\epsilon}_{\frac{3}{32}}=10,\quad \dim V^{\epsilon,-\epsilon}_{-\frac{5}{32}}=6,& & \dim V^{\epsilon\epsilon}_{\frac{1}{32}}=15,\quad \dim V^{\epsilon\epsilon}_{-\frac{15}{32}}=1.
\end{aligned}
\end{equation}
\begin{equation}
so_8:\qquad
\begin{aligned}
&\Big(\hC_{\epsilon,-\epsilon}-\frac{1}{24}\Big)\Big(\hC_{\epsilon,-\epsilon}+\frac{7}{24}\Big)=0, &\qquad& \Big(\hC_{\epsilon\epsilon}-\frac{1}{12}\Big)\Big(\hC_{\epsilon\epsilon}+\frac{1}{12}\Big)\Big(\hC_{\epsilon\epsilon}+\frac{7}{12}\Big)=0,\\
&\dim V^{\epsilon,-\epsilon}_{\frac{1}{24}}=56,\quad \dim V^{\epsilon,-\epsilon}_{-\frac{7}{24}}=8,& & \dim V^{\epsilon\epsilon}_{\frac{1}{12}}=35,\quad \dim V^{\epsilon\epsilon}_{-\frac{1}{12}}=28,\quad \dim V^{\epsilon\epsilon}_{-\frac{7}{12}}=1.
\end{aligned}
\end{equation}
\begin{equation}
so_{10}:\quad
\begin{aligned}
&\Big(\hC_{\epsilon,-\epsilon}-\frac{5}{64}\Big)\Big(\hC_{\epsilon,-\epsilon}+\frac{3}{64}\Big)\Big(\hC_{\epsilon,-\epsilon}+\frac{27}{64}\Big)=0, &\quad& \Big(\hC_{\epsilon\epsilon}-\frac{3}{64}\Big)\Big(\hC_{\epsilon\epsilon}+\frac{13}{64}\Big)\Big(\hC_{\epsilon\epsilon}+\frac{45}{64}\Big)=0,\\
&\dim V^{\epsilon,-\epsilon}_{\frac{5}{64}}=126,\quad \dim V^{\epsilon,-\epsilon}_{-\frac{3}{64}}=120,& & \dim V^{\epsilon\epsilon}_{\frac{3}{64}}=210,\quad \dim V^{\epsilon\epsilon}_{-\frac{13}{64}}=45,\\
&\dim V^{\epsilon,-\epsilon}_{-\frac{27}{64}}=10, & &\dim V^{\epsilon\epsilon}_{-\frac{45}{64}}=1.
\end{aligned}
\end{equation}

\section{Second approach to the derivation of characteristic identities for the split Casimir operator of the Lie algebra $so_{2r}$ in tensor products of spinor representations}
\label{Sec: hC_ee' char idens second derivation}

In this section, we use properties of the Lie algebra $so_N$ and its representations, summarised in Appendix~\ref{app: so_N and its external powers}.

\subsection{Characteristic identities of the split Casimir operator of the Lie algebra $so_{2r}$ in the representations $\Delta_{\pm}\otimes \Delta_{\pm}$ and $\Delta_{\pm}\otimes \Delta_{\mp}$}

To derive the characteristic identities of the split Casimir operator
\begin{equation}
\hC=\okg^{i_1i_2,j_1j_2}M_{i_1i_2}\otimes M_{j_1j_2},
\end{equation}
of the Lie algebra $so_{2r}$ (whose inverse Cartan–Killing metric $\okg^{i_1i_2,j_1j_2}$ is given in \eqref{soKillingMetric}) in the representations $\Delta_\epsilon\otimes \Delta_{\epsilon'}$, $\epsilon,\epsilon'=\pm$, that is, for the operators $\hC_{\epsilon\epsilon'}:=(\Delta_\epsilon\otimes \Delta_{\epsilon'})\hC$, we use the following result~\cite{Weyl} (see also~\cite{IsRu2}):

\begin{proposition}\label{Proposition 6.4.1 from IsRu2}
For even $r$ the following decompositions hold:
\begin{equation}\label{Delta times Delta expansion: even nu}
\begin{aligned}
\Delta_\epsilon\otimes \Delta_{-\epsilon}&=T_1\oplus T_3\oplus\dots\oplus T_{r-1},\\
\Delta_\epsilon\otimes \Delta_{\epsilon}&=T_0\oplus T_2\oplus\dots\oplus T_{r-2}+
\left\{\begin{array}{l}
T_r^\epsilon,\quad\text{if}\quad \frac{r}{2}\in 2\mathbb{Z},\\
T_r^{-\epsilon},\quad\text{if}\quad \frac{r}{2}\in 2\mathbb{Z}+1,
\end{array}\right.
\end{aligned}
\end{equation}
and for odd $r$:
\begin{equation}\label{Delta times Delta expansion: odd nu}
\begin{aligned}
\Delta_\epsilon\otimes \Delta_{-\epsilon}&=T_0\oplus T_2\oplus\dots\oplus T_{r-1},\\
\Delta_\epsilon\otimes \Delta_{\epsilon}&=T_1\oplus T_3\oplus\dots\oplus T_{r-2}+
\left\{\begin{array}{l}
T_r^\epsilon,\quad\text{if}\quad \frac{r-1}{2}\in 2\mathbb{Z},\\
T_r^{-\epsilon},\quad\text{if}\quad \frac{r-1}{2}\in 2\mathbb{Z}+1,
\end{array}\right.
\end{aligned}
\end{equation}
where $T_0$ denotes the singlet representation.\footnote{
In~\cite{IsRu2} these decompositions are given for representations of the form
$\Delta_\epsilon\otimes \oDelta_{\epsilon'}$.
They can, however, be brought to the form
\eqref{Delta times Delta expansion: even nu} and
\eqref{Delta times Delta expansion: odd nu}
by using the isomorphisms between $\Delta_\epsilon$ and $\oDelta_{\epsilon'}$
presented in \eqref{Delta bar(Delta) iso}.
}
\end{proposition}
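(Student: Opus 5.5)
The plan is to realise the tensor products through the Clifford algebra, using the isomorphism $\rho\otimes\overline{\rho}\cong\mathrm{End}(\rho)=\rho(\Cl_{2r})$, and then convert the dual factor back via \eqref{Delta bar(Delta) iso}, as the footnote indicates.

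\textbf{Step 1: decompose the full Clifford algebra.} The algebra $\rho(\Cl_{2r})$ is spanned by the antisymmetrised products $\Gamma_{[i_1\dots i_k]}$, $k=0,\dots,2r$. Under the adjoint action $X\mapsto[\rho(M_{ij}),X]$, the span of the degree-$k$ products transforms as $T_k=\Lambda^k V$, where $V$ is the vector representation. This follows from the commutator $[M_{ij},\Gamma_l]$, which shows that the $\Gamma_l$ transform as a vector. The products are linearly independent and $\sum_k\binom{2r}{k}=2^{2r}=(\dim\rho)^2$. Hence
\[
\rho\otimes\overline{\rho}\cong\bigoplus_{k=0}^{2r}\Lambda^k V .
\]

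\textbf{Step 2: restrict to chiral blocks.} Next I would use the chirality operator $\Gamma_{2r+1}$. Maps $\Delta_{\epsilon'}\to\Delta_\epsilon$ are the elements $\proj_\epsilon X\proj_{\epsilon'}$, that is, the space $\Delta_\epsilon\otimes\oDelta_{\epsilon'}$. Since $\Gamma_{2r+1}$ anticommutes with every $\Gamma_i$:
\begin{itemize}
\item products of even degree $k$ commute with $\Gamma_{2r+1}$ and lie in the diagonal blocks $\epsilon'=\epsilon$;
\item products of odd degree anticommute with $\Gamma_{2r+1}$ and lie in the off-diagonal blocks.
\end{itemize}
The key input is identity \eqref{Gamma_N+1 product identity}, the Hodge-duality relation $\Gamma_{[k]}\Gamma_{2r+1}\propto\varepsilon\,\Gamma^{[2r-k]}$ already used in Lemma~\ref{Lemma 1}. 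Multiplying $\Lambda^kV\oplus\Lambda^{2r-k}V$ by $\proj_\epsilon=\frac12(I+\epsilon\Gamma_{2r+1})$ identifies degree $k$ with its dual $2r-k$, so exactly one copy of $\Lambda^kV$ ($k<r$) survives in each block of the appropriate parity. For $k=r$, $\Lambda^rV=T_r^+\oplus T_r^-$ splits into its $\pm$ (anti)self-dual parts, and only the part with the matching eigenvalue of the Hodge star survives. That eigenvalue is fixed by the phase $(-i)^r(-1)^{[r/2]}$ in \eqref{Gamma_N+1 product identity}.

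\textbf{Step 3: dimension check and conversion.} As a consistency check, each block should have dimension $2^{2r-2}$. This holds because $\sum_{k<r,\ k\equiv p}\binom{2r}{k}$, plus $\tfrac12\binom{2r}{r}$ when $r\equiv p$, equals $2^{2r-2}$; this follows from the symmetry $\binom{2r}{k}=\binom{2r}{2r-k}$ together with the vanishing alternating sum. Finally I would apply \eqref{Delta bar(Delta) iso}. For even $r$, $\oDelta_\epsilon\cong\Delta_\epsilon$, so the block parities above transfer directly. For odd $r$, $\oDelta_\epsilon\cong\Delta_{-\epsilon}$, which swaps the roles of $\epsilon'=\pm\epsilon$. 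This exchange is exactly the parity interchange between \eqref{Delta times Delta expansion: even nu} and \eqref{Delta times Delta expansion: odd nu}.

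\textbf{Main obstacle.} I expect the hardest step to be the sign bookkeeping for the middle term $T_r^{\pm}$. One has to determine which of $T_r^{\epsilon}$ and $T_r^{-\epsilon}$ appears, depending on $\frac r2$ or $\frac{r-1}{2}$ mod $2$. This combines three ingredients:
\begin{itemize}
\item the phase in \eqref{Gamma_N+1 product identity};
\item the convention labelling $T_r^\pm$ by the eigenvalue of the Hodge star;
\item the convention in the chirality isomorphism \eqref{Delta bar(Delta) iso}.
\end{itemize}
I would fix this by computing $\proj_\epsilon\,\Gamma_{[1\dots r]}\,\proj_{\epsilon'}$ explicitly and reading off its duality eigenvalue.
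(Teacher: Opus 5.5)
Your argument is correct, but it does more than the paper does here. The paper gives no derivation: it quotes the decomposition of $\Delta_\epsilon\otimes\oDelta_{\epsilon'}$ from the cited references and then, as its footnote says, converts $\oDelta_{\epsilon'}$ using \eqref{Delta bar(Delta) iso}; that conversion is exactly your last step. What you add is the standard Clifford-algebra proof of the quoted result: $\rho\otimes\overline{\rho}\cong\rho(\Cl_{2r})=\bigoplus_k\Lambda^kV$, even and odd degrees sorting into the diagonal and off-diagonal chiral blocks, and the pairing $k\leftrightarrow 2r-k$ via \eqref{Gamma_N+1 product identity}. This makes the statement self-contained. It also fixes the label of the middle term $T_r^{\pm}$ in the paper's own conventions, rather than importing it from a source whose phase conventions might differ.

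The sign step you left as a plan does work out. Take $X=t^{i_1\dots i_r}\Gamma_{[i_1\dots i_r]}$. Identity \eqref{Gamma_N+1 product identity}, together with $\varepsilon_{i_1\dots i_r k_1\dots k_r}=(-1)^{r}\varepsilon_{k_1\dots k_r i_1\dots i_r}$, shows that right multiplication by $\Gamma_{2r+1}$ acts on the coefficients $t$ as $(-1)^{r+[r/2]}$ times the duality operator $t\mapsto\frac{(-i)^r}{r!}\varepsilon\,t$ that defines $T_r^\pm$. Hence $\Lambda^rV\,\proj_\epsilon\cong T_r^{(-1)^{r+[r/2]}\epsilon}$.
\begin{itemize}
\item For even $r$, this lies in $\proj_\epsilon\rho(\Cl_{2r})\proj_\epsilon\cong\Delta_\epsilon\otimes\Delta_\epsilon$ and equals $T_r^{(-1)^{r/2}\epsilon}$.
\item For odd $r$, it lies in $\proj_{-\epsilon}\rho(\Cl_{2r})\proj_\epsilon\cong\Delta_{-\epsilon}\otimes\oDelta_\epsilon\cong\Delta_\eta\otimes\Delta_\eta$ with $\eta=-\epsilon$, and equals $T_r^{(-1)^{(r-1)/2}\eta}$.
\end{itemize}
These are precisely the stated assignments.

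Your third listed ingredient is not actually needed. The conversion via \eqref{Delta bar(Delta) iso} involves only isomorphism classes, so no convention for that isomorphism enters. The only convention-dependent inputs are the phases $(-i)^r$ in $\Gamma_{2r+1}$ and in the duality condition, plus the reordering sign of $\varepsilon$.
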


Based on Proposition~\ref{Proposition 6.4.1 from IsRu2}, one obtains the following.

\begin{proposition}\label{Prop: char identities for hC}
For even $r$ the operators $\hC_{\epsilon,-\epsilon}$ and $\hC_{\epsilon,\epsilon}$, $\epsilon=\pm$, satisfy the characteristic identities
\begin{equation}\label{hC_(alpha,pm alpha) for even nu}
\prod_{m=1}^{\frac{r}{2}}\bigg(\hC_{\epsilon,-\epsilon}-c_{(2),2m-1}\bigg)=0,\hspace{2cm}
\prod_{m=0}^{\frac{r}{2}}\bigg(\hC_{\epsilon,\epsilon}-c_{(2),2m}\bigg)=0,
\end{equation}
and for odd $r$:
\begin{equation}\label{hC_(alpha,pm alpha) for odd nu}
\prod_{m=0}^{\frac{r-1}{2}}\bigg(\hC_{\epsilon,-\epsilon}-c_{(2),2m}\bigg)=0,\hspace{2cm}
\prod_{m=1}^{\frac{r+1}{2}}\bigg(\hC_{\epsilon,\epsilon}-c_{(2),2m-1}\bigg)=0,
\end{equation}
where
\begin{equation}\label{hC eigenvalues on Delta times Delta}
c_{(2),k}:=\frac{2k(2r-k)-r(2r-1)}{16(r-1)}.
\end{equation}
\end{proposition}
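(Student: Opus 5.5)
The plan is to combine the decompositions of Proposition~\ref{Proposition 6.4.1 from IsRu2} with formula~\eqref{Split via quadratic} and the general fact~\eqref{CharDef}. By Schur's lemma, $\hC_{\epsilon\epsilon'}$ acts on each irreducible summand $T_k$ (or $T_r^{\pm}$) of $\Delta_\epsilon\otimes\Delta_{\epsilon'}$ as the scalar
\[
c_{(2)}^{T_k}=\tfrac12\bigl(c_2^{T_k}-c_2^{\Delta_\epsilon}-c_2^{\Delta_{\epsilon'}}\bigr).
\]
So the whole proof reduces to three things: computing these eigenvalues, checking that they equal $c_{(2),k}$ in~\eqref{hC eigenvalues on Delta times Delta}, and checking that they are pairwise distinct on each decomposition. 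The products in~\eqref{hC_(alpha,pm alpha) for even nu} and~\eqref{hC_(alpha,pm alpha) for odd nu} then run exactly over the summands listed in~\eqref{Delta times Delta expansion: even nu} and~\eqref{Delta times Delta expansion: odd nu}. The chirality labels on $T_r^{\pm}$ do not matter, because $T_r^+$ and $T_r^-$ share the same Casimir value.

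The eigenvalues come from~\eqref{c_2 via weights} with the normalisation~\eqref{Root product norm} at $N=2r$, i.e. $(e^{(i)},e^{(j)})=\delta^{ij}/(4(r-1))$. For $so_{2r}$ the Weyl vector is $\delta=\sum_{i=1}^r(r-i)e^{(i)}$.

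The spinor highest weights are $\tfrac12(e^{(1)}+\dots+e^{(r-1)}\pm e^{(r)})$. This gives
\[
c_2^{\Delta_\pm}=\frac{1}{4(r-1)}\Bigl(\frac r4+\sum_{i=1}^r(r-i)\Bigr)=\frac{r(2r-1)}{16(r-1)}.
\]

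For $T_k=\Lambda^k$ (the $k$-th antisymmetric power of the vector representation), the highest weight is $e^{(1)}+\dots+e^{(k)}$ for $k<r$, and $e^{(1)}+\dots+e^{(r-1)}\pm e^{(r)}$ for $T_r^\pm$. Then
\[
c_2^{T_k}=\frac{1}{4(r-1)}\Bigl(k+2\sum_{i=1}^k(r-i)\Bigr)=\frac{k(2r-k)}{4(r-1)},
\]
and this formula also holds for $k=r$, since the term $\pm e^{(r)}$ is paired with $\delta_r=0$.

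Substituting into the formula above gives
\[
c_{(2)}^{T_k}=\frac12\Bigl(\frac{k(2r-k)}{4(r-1)}-\frac{r(2r-1)}{8(r-1)}\Bigr)=c_{(2),k}.
\]
I will sanity-check this against the $so_8$ and $so_{10}$ examples listed in the previous section, e.g. $k=0$, $r=4$ gives $-7/12$.

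For minimality, note that $c_{(2),k}$ depends on $k$ through $k(2r-k)$, which is strictly increasing for $0\le k\le r$. Hence all eigenvalues occurring in a given decomposition are distinct, and every listed factor is genuinely needed, since each summand is a nonzero eigenspace. By~\eqref{CharDef}, the products are therefore exactly the characteristic identities.

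The main obstacle is bookkeeping rather than conceptual. One must make sure that $T_k$ in Proposition~\ref{Proposition 6.4.1 from IsRu2} really denotes the irreducible $\Lambda^k$ with the stated highest weight, including the self-dual and anti-self-dual halves $T_r^\pm$. One must also confirm that the index ranges in the products match the summands: for even $r$, odd $k\le r-1$ for $\hC_{\epsilon,-\epsilon}$ and even $k\le r$ for $\hC_{\epsilon\epsilon}$, and analogously for odd $r$. This matching also supplies the minimality that was left open in the remarks following Propositions~\ref{I_k's, Prop: char identities for hC_rho} and~\ref{I_k hC_epsilon,epsilon' char idens}.
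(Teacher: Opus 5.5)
Your proposal is correct and is essentially the paper's own proof. Both apply Schur's lemma to the summands of Proposition~\ref{Proposition 6.4.1 from IsRu2}, use relation~\eqref{Split via quadratic}, and take the Casimir values \eqref{c_2 on Delta_pm}, \eqref{c_2 on T_f^(wedge k)}, \eqref{c_2 on T_pm^(wedge nu)}; your monotonicity argument for $k(2r-k)$ on $0\le k\le r$ additionally makes explicit the distinctness of the eigenvalues, and hence the minimality, which the paper leaves tacit. One caution for your planned sanity check: the odd-$r$ examples of Section~3 ($so_6$, $so_{10}$) list the correct eigenvalue sets but with the labels $\epsilon\epsilon$ and $\epsilon,-\epsilon$ interchanged (the factor $(-1)^r$ in \eqref{proj_ee' I_k} is dropped in \eqref{I_k's: hC possible idens}), and it is the assignment in the present proposition that is right, e.g.\ $16\otimes 16=10\oplus 120\oplus 126$ for $so_{10}$.
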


\begin{proof}
Consider the case of even $r$ and the operator $\hC_{\epsilon,\epsilon}$ (the remaining cases are proved analogously).
Proposition~\ref{Proposition 6.4.1 from IsRu2} implies that the space of the representation $\Delta_\epsilon\otimes \Delta_{\epsilon}$, $\epsilon=\pm$, decomposes into eigenspaces of the quadratic Casimir operator $C_2$ of $so_{2r}$ with eigenvalues given in \eqref{c_2 on T_f^(wedge k)}, where $k$ runs over all even integers from $0$ to $r-2$, together with one additional eigenvalue given in \eqref{c_2 on T_pm^(wedge nu)}.

Using the relation~\eqref{Split via quadratic}, which connects the eigenvalues of $\hC$ and $C_2$ in irreducible representations of $so_{2r}$, and taking into account the eigenvalue~\eqref{c_2 on Delta_pm} of the quadratic Casimir operator in the representation $\Delta_\epsilon$, we obtain expression~\eqref{hC eigenvalues on Delta times Delta} for the spectrum of $\hC$ in the representation $\Delta_\epsilon\otimes \Delta_{\epsilon}$, where $k$ runs over all even integers from $0$ to $2r$. This yields the second identity in \eqref{hC_(alpha,pm alpha) for even nu}.
\end{proof}

The characteristic identities
\eqref{hC_(alpha,pm alpha) for even nu}
and
\eqref{hC_(alpha,pm alpha) for odd nu}
for the operators $\hC_{\epsilon,\epsilon'}$, $\epsilon,\epsilon'=\pm$,
allow one to construct explicit expressions for the projectors
$\proj_k^{\epsilon\epsilon'}\equiv \proj_{c_{(2),k}}^{\epsilon\epsilon'}$
onto their eigenspaces with eigenvalue $c_{(2),k}$ given in
\eqref{hC eigenvalues on Delta times Delta},
using formula~\eqref{GenPr}.

It should be emphasised that for fixed $\epsilon$ and $\epsilon'$
not all values of $k$ occur in the spectrum of $\hC_{\epsilon,\epsilon'}$,
since they are restricted by the decompositions
\eqref{Delta times Delta expansion: even nu}
and
\eqref{Delta times Delta expansion: odd nu}.
For values of $k$ not appearing in these decompositions,
the corresponding projectors
$\proj_k^{\epsilon,\epsilon'}$
vanish identically.

Thus, for even $r$ the non-trivial projectors are
\begin{equation}\label{proj^(alpha,pm alpha) for even nu}
\begin{aligned}
\proj_{2m-1}^{\epsilon,-\epsilon}&=\prod_{\substack{\ell=1\\ \ell\neq m}}^{\frac{r}{2}}\frac{\hC_{\epsilon,-\epsilon}-c_{(2),2\ell-1}}{c_{(2),2m-1}-c_{(2),2\ell-1}},\qquad &\text{for } 1\le m\le \frac{r}{2},\\
\proj_{2m}^{\epsilon,\epsilon}&=\prod_{\substack{\ell=0\\ \ell\neq m}}^{\frac{r}{2}}\frac{\hC_{\epsilon,\epsilon}-c_{(2),2\ell}}{c_{(2),2m}-c_{(2),2\ell}},\qquad &\text{for } 0\le m\le \frac{r}{2},
\end{aligned}
\end{equation}
and for odd $r$:
\begin{equation}\label{proj^(alpha,pm alpha) for odd nu}
\begin{aligned}
\proj_{2m}^{\epsilon,-\epsilon}&=\prod_{\substack{\ell=0\\ \ell\neq m}}^{\frac{r-1}{2}}\frac{\hC_{\epsilon,-\epsilon}-c_{(2),2\ell}}{c_{(2),2m}-c_{(2),2\ell}},\qquad &\text{for } 0\le m\le \frac{r-1}{2},\\
\proj_{2m-1}^{\epsilon,\epsilon}&=\prod_{\substack{\ell=1\\ \ell\neq m}}^{\frac{r+1}{2}}\frac{\hC_{\epsilon,\epsilon}-c_{(2),2\ell-1}}{c_{(2),2m-1}-c_{(2),2\ell-1}},\qquad &\text{for } 1\le m\le \frac{r+1}{2}.
\end{aligned}
\end{equation}

The traces of the projectors \eqref{proj^(alpha,pm alpha) for even nu} and \eqref{proj^(alpha,pm alpha) for odd nu} are given in the following proposition.

\begin{proposition}\label{Prop: traces for projectors}
For even $r$:
\begin{equation}\label{projectors traces for even nu}
\begin{aligned}
\tr \proj_{2m-1}^{\epsilon,-\epsilon}
&=\frac{(2r)!}{(2m-1)!(2r-2m+1)!},
&\qquad &\text{for } 1\le m\le \frac{r}{2},\\
\tr \proj_{2m}^{\epsilon,\epsilon}
&=\frac{(2r)!}{(2m)!(2r-2m)!},
&\qquad &\text{for } 0\le m\le\frac{r}{2}-1,\\
\tr \proj_{r}^{\epsilon,\epsilon}
&=\frac{(2r)!}{2(r!)^2},
\end{aligned}
\end{equation}
and for odd $r$:
\begin{equation}\label{projectors traces for odd nu}
\begin{aligned}
\tr \proj_{2m}^{\epsilon,-\epsilon}
&=\frac{(2r)!}{(2m)!(2r-2m)!},
&\qquad &\text{for } 0\le m\le\frac{r-1}{2},\\
\tr \proj_{2m-1}^{\epsilon,\epsilon}
&=\frac{(2r)!}{(2m-1)!(2r-2m+1)!},
&\qquad &\text{for } 1\le m \le \frac{r-1}{2},\\
\tr \proj_{r}^{\epsilon,\epsilon}
&=\frac{(2r)!}{2(r!)^2}.
\end{aligned}
\end{equation}
\end{proposition}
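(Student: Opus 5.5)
The quickest route avoids any explicit manipulation of the polynomial projectors: use the representation-theoretic meaning of the eigenspaces. By Proposition~\ref{Prop: char identities for hC} and its proof, the eigenspace of $\hC_{\epsilon\epsilon'}$ with eigenvalue $c_{(2),k}$ is the space of the representation $T_k$ (or $T_r^{\pm}$ when $k=r$) occurring in the decompositions of Proposition~\ref{Proposition 6.4.1 from IsRu2}. We also need each $c_{(2),k}$ to occur only once in a given $\Delta_\epsilon\otimes\Delta_{\epsilon'}$. This holds because $c_{(2),k}$ in \eqref{hC eigenvalues on Delta times Delta} is strictly increasing in $k$ for $0\le k\le r$, since $2k(2r-k)$ is increasing there. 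Each irreducible summand appears with multiplicity one, so the projector $\proj_k^{\epsilon\epsilon'}$ defined by \eqref{GenPr} is exactly the projector onto that irreducible subspace. Hence $\tr\proj_k^{\epsilon\epsilon'}=\dim T_k$.

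It therefore remains to recall the dimensions. From Appendix~\ref{app: so_N and its external powers}, $T_k$ for $k<r$ is the $k$-th exterior power of the vector representation of $so_{2r}$, of dimension $\binom{2r}{k}=\frac{(2r)!}{k!(2r-k)!}$. The $r$-th exterior power splits into the self-dual and anti-self-dual parts $T_r^{+}\oplus T_r^{-}$, of equal dimension $\frac12\binom{2r}{r}=\frac{(2r)!}{2(r!)^2}$. Substituting $k=2m-1$ or $k=2m$ according to the parity cases in \eqref{Delta times Delta expansion: even nu} and \eqref{Delta times Delta expansion: odd nu} gives all the formulas in \eqref{projectors traces for even nu} and \eqref{projectors traces for odd nu}. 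The case $k=r$ always lies in $\Delta_\epsilon\otimes\Delta_\epsilon$, which accounts for the last line in each list.

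As a consistency check, I would verify that the traces sum to $(\dim\Delta_\pm)^2=2^{2r-2}$. The dimensions in $\Delta_\epsilon\otimes\Delta_{-\epsilon}$, namely $\binom{2r}{k}$ over the relevant parity up to $r-1$, sum to $2^{2r-2}$ by the symmetry $\binom{2r}{k}=\binom{2r}{2r-k}$ together with the equality of the even and odd binomial sums. Likewise the dimensions in $\Delta_\epsilon\otimes\Delta_\epsilon$ sum to $2^{2r-2}$, with the $\frac12\binom{2r}{r}$ term supplying the middle contribution.

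The main point needing care is the identification of $\dim T_r^{\pm}$ as exactly half of $\binom{2r}{r}$, i.e.\ that Hodge duality splits $\wedge^r$ into two pieces of equal dimension. This follows from the $\pm1$ eigenspaces of the Hodge star (suitably normalised), which are swapped by an orientation-reversing reflection. If the appendix does not state the dimensions directly, an alternative is to compute $\tr\proj_k$ from the Weyl dimension formula. A check from the paper's own data is also available: the tabulated dimensions for $r=2,\dots,5$ match the formulas, e.g.\ $\dim V^{\epsilon\epsilon}_{-45/64}=1$, $210=\binom{10}{4}$, and $126=\frac12\binom{10}{5}$ for $so_{10}$.
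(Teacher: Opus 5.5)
Your argument is correct and is essentially the paper's proof. The trace of $\proj_k^{\epsilon\epsilon'}$ is the dimension of the $c_{(2),k}$-eigenspace, which by Proposition~\ref{Proposition 6.4.1 from IsRu2} is a single copy of $T_k$ (or $T_r^{\pm}$ for $k=r$), read off from \eqref{T_f^(wedge k) dimension} and \eqref{T_(pm)^(wedge nu) dimension}; your monotonicity check on $c_{(2),k}$ makes explicit the distinctness of eigenvalues the paper uses tacitly.

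One caveat about your optional cross-check: for odd $r$ the Section~3 tables attach the eigenvalues to the opposite chirality pairing from the one in this statement. For example, for $r=5$ the eigenvalue $c_{(2),0}=-\frac{45}{64}$ with dimension $1$ belongs to $\Delta_\epsilon\otimes\Delta_{-\epsilon}$, not to $\Delta_\epsilon\otimes\Delta_\epsilon$. The swap is traceable to the factor $(-1)^r$ from Lemma~\ref{Lemma 1} being dropped in \eqref{I_k's: hC possible idens}. So those tables confirm the numerical values but not the $\epsilon\epsilon$ versus $\epsilon,-\epsilon$ assignment, and citing $\dim V^{\epsilon\epsilon}_{-45/64}=1$ as agreement actually contradicts the statement you are proving.
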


\begin{proof}
As in the proof of Proposition~\ref{Prop: char identities for hC}, we consider only the case of even $r$ and the projectors $\proj_{2m}^{\epsilon,\epsilon}$; the remaining cases are treated analogously.

The trace of the projector $\proj_{2m}^{\epsilon,\epsilon}$ equals the dimension of the eigenspace of $\hC_{\epsilon,\epsilon}$ with eigenvalue $c_{(2),2m}$. For $m=0,\dots,\frac{r}{2}-1$, this eigenspace coincides with the representation space of $T_{2m}$ whose dimension is given in \eqref{T_f^(wedge k) dimension} with $k=2m$. For $m=\frac{r}{2}$, it coincides with the representation space of $T_r^\pm$ whose dimension is given in \eqref{T_(pm)^(wedge nu) dimension}.
\end{proof}

Consider now the operator $\hC_\rho=\sum_{\epsilon,\epsilon'=\pm}\hC_{\epsilon,\epsilon'}$. For a fixed eigenvalue $c_{(2),k}$, its eigenspace is the direct sum of the eigenspaces of the operators $\hC_{\epsilon,\epsilon'}$, $\epsilon,\epsilon'=\pm$, with the same eigenvalue. The characteristic identities
\eqref{hC_(alpha,pm alpha) for even nu} and
\eqref{hC_(alpha,pm alpha) for odd nu} imply that the operators $\hC_{++}$ and $\hC_{--}$, as well as $\hC_{+-}$ and $\hC_{-+}$, have coinciding spectra. Accordingly, their eigenspaces corresponding to a given eigenvalue combine naturally and are projected onto by
\begin{equation}\label{proj_k^S and proj_k^AS}
\proj_k^{S}:=\proj_k^{++}+\proj_k^{--},\qquad
\proj_k^{AS}:=\proj_k^{+-}+\proj_k^{-+},
\end{equation}
where $\proj_k^{\epsilon,\epsilon'}$ are defined in
\eqref{proj^(alpha,pm alpha) for even nu} and
\eqref{proj^(alpha,pm alpha) for odd nu}.
Clearly, the images of these projectors lie in the spaces of the representations
\begin{equation}\label{Delta^S and Delta^AS def}
\Delta^{S}:=(\Delta_+\otimes \Delta_+)+(\Delta_-\otimes \Delta_-),
\qquad
\Delta^{AS}:=(\Delta_+\otimes \Delta_-)+(\Delta_-\otimes \Delta_+),
\end{equation}
respectively.

Combining Propositions \ref{Prop: char identities for hC} and
\ref{Prop: traces for projectors}, we obtain the following.

\begin{proposition}\label{Prop: char identities for hC in rho otimes rho}
The operator $\hC_\rho$ satisfies the characteristic identity
\begin{equation}\label{hC_rho factorised identity}
\prod_{k=0}^r \big(\hC_\rho-c_{(2),k}\big)=0,
\end{equation}
where $c_{(2),k}$ is defined in
\eqref{hC eigenvalues on Delta times Delta}.

The projector $\proj_k\equiv \proj_{c_{(2),k}}$ onto the eigenspace of $\hC_\rho$ with eigenvalue $c_{(2),k}$ is given, for even $r$, by
\begin{equation}\label{sums of projectors Delta to rho for even r}
\proj_k=\proj_{k}^{S}\quad\text{for}\quad k\in 2\mathbb{Z},
\qquad
\proj_k=\proj_{k}^{AS}\quad\text{for}\quad k\in 2\mathbb{Z}+1,
\end{equation}
and, for odd $r$, by
\begin{equation}\label{sums of projectors Delta to rho for odd r}
\proj_k=\proj_{k}^{AS}\quad\text{for}\quad k\in 2\mathbb{Z},
\qquad
\proj_k=\proj_{k}^{S}\quad\text{for}\quad k\in 2\mathbb{Z}+1,
\end{equation}
where $\proj_k^S$ and $\proj_k^{AS}$ are defined in
\eqref{proj_k^S and proj_k^AS}.

The traces of these projectors are
\begin{equation}
\begin{aligned}
\tr \proj_k &= 2\frac{(2r)!}{k!(2r-k)!}
\qquad \text{for}\quad 0\le k\le r-1,\\
\tr \proj_r &= \frac{(2r)!}{(r!)^2}.
\end{aligned}
\end{equation}
\end{proposition}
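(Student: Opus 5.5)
We write $\hC_\rho$ in block form with respect to the decomposition $\rho\otimes\rho=\bigoplus_{\epsilon,\epsilon'=\pm}\Delta_\epsilon\otimes\Delta_{\epsilon'}$. The projectors $\proj_{\epsilon\epsilon'}$ are mutually orthogonal, sum to $I\otimes I$, and commute with $\hC_\rho$ by invariance. Hence $\hC_\rho=\sum_{\epsilon\epsilon'}\hC_{\epsilon\epsilon'}$, with $\hC_{\epsilon\epsilon'}\hC_{\eta\eta'}=0$ for $(\epsilon,\epsilon')\neq(\eta,\eta')$. It follows that for any polynomial $J$ we have $J(\hC_\rho)=\sum_{\epsilon\epsilon'}J(\hC_{\epsilon\epsilon'})$, where each $J(\hC_{\epsilon\epsilon'})$ is understood as restricted to its block, i.e.\ $\proj_{\epsilon\epsilon'}J(\hC_\rho)$.

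\textbf{Characteristic identity.} Take $J(x)=\prod_{k=0}^r(x-c_{(2),k})$. For each pair $(\epsilon,\epsilon')$ the minimal polynomial of $\hC_{\epsilon\epsilon'}$ from Proposition~\ref{Prop: char identities for hC} divides $J$: its roots are the $c_{(2),k}$ with $k$ of a fixed parity in $\{0,\dots,r\}$, all distinct. Thus every block of $J(\hC_\rho)$ vanishes and \eqref{hC_rho factorised identity} holds.

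\textbf{Minimality.} We need every $c_{(2),k}$, $0\le k\le r$, to be an actual eigenvalue, and the $c_{(2),k}$ to be pairwise distinct. For distinctness, $k\mapsto 2k(2r-k)$ is strictly increasing on $[0,r]$. For existence, even $k$ appear in $\hC_{\epsilon\epsilon}$ and odd $k$ in $\hC_{\epsilon,-\epsilon}$ when $r$ is even, with the parities swapped when $r$ is odd, by Proposition~\ref{Proposition 6.4.1 from IsRu2}. Since $\hC_\rho$ is diagonalisable as a sum of diagonalisable commuting blocks, the minimal polynomial is exactly the product over its distinct eigenvalues, which gives minimality.

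\textbf{Projectors.} Write $\proj_k$ from \eqref{GenPr} and apply it blockwise. On the block $\Delta_\epsilon\otimes\Delta_{\epsilon'}$, the polynomial $\proj_k(\hC_\rho)$ restricts to the Lagrange interpolation polynomial that equals $1$ at $c_{(2),k}$ and $0$ at the other eigenvalues of $\hC_{\epsilon\epsilon'}$. This is exactly $\proj_k^{\epsilon\epsilon'}$ if $c_{(2),k}$ lies in the spectrum of that block, and $0$ otherwise. The extra factors $(x-c_{(2),l})/(c_{(2),k}-c_{(2),l})$, for $l$ of the other parity, do not matter, because the projector onto an eigenspace of a diagonalisable operator is independent of the chosen interpolating polynomial. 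Summing over the blocks gives $\proj_k=\proj_k^S$ or $\proj_k^{AS}$ according to the parity rules \eqref{sums of projectors Delta to rho for even r} and \eqref{sums of projectors Delta to rho for odd r}.

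\textbf{Traces.} Take traces and use Proposition~\ref{Prop: traces for projectors}. For $k<r$ each of the two blocks contributes $\binom{2r}{k}$, giving $2\binom{2r}{k}$. For $k=r$ each of the two blocks $++$ and $--$ contributes $\tfrac12\binom{2r}{r}$, giving $\binom{2r}{r}$. As a consistency check, the total is $\sum_{k=0}^{r-1}2\binom{2r}{k}+\binom{2r}{r}=2^{2r}=(\dim\rho)^2$.

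The main obstacle is bookkeeping rather than any deep step: matching which parity of $k$ occurs in which block for each parity of $r$, and checking that $k=r$ comes only from the $\epsilon\epsilon$ blocks, each contributing half the dimension.
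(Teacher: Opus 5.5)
Your proof is correct and takes the same route as the paper. You split $\hC_\rho=\sum_{\epsilon,\epsilon'}\hC_{\epsilon\epsilon'}$ into the four invariant blocks, merge the block spectra from Proposition~\ref{Prop: char identities for hC} into the identity and the projectors $\proj_k^S$, $\proj_k^{AS}$, and read off the traces from Proposition~\ref{Prop: traces for projectors}; you also spell out two points the paper leaves implicit, namely that the $c_{(2),k}$ are pairwise distinct (which gives minimality) and the blockwise Lagrange-interpolation argument for the projectors.
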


\begin{remark}
From Proposition \ref{Prop: char identities for hC in rho otimes rho} and the definition \eqref{proj_k^S and proj_k^AS} of the projectors $\proj_k^{S}$ and $\proj_k^{AS}$ it follows that the images of the projectors $\proj_k$ for $k=0,\dots,r-1$ coincide with the representation spaces of $T_k\oplus T_k$, while the image of the projector $\proj_r$ is the representation space of $T_r=T_r^+\oplus T_r^-$.

Moreover, for even $r$ the images of $\proj_{2m}$ and $\proj_{2m-1}$ lie in the representation spaces of $\Delta^S$ and $\Delta^{AS}$ respectively, whereas for odd $r$ they lie in the representation spaces of $\Delta^{AS}$ and $\Delta^S$, respectively.
\end{remark}

\begin{remark}
Propositions \ref{Prop: char identities for hC} and \ref{Prop: char identities for hC in rho otimes rho} allow one to complete the proofs of Propositions \ref{I_k's, Prop: char identities for hC_rho} and \ref{I_k hC_epsilon,epsilon' char idens}. Since relations \eqref{hC_(alpha,pm alpha) for even nu}, \eqref{hC_(alpha,pm alpha) for odd nu}, and \eqref{hC_rho factorised identity} are the characteristic identities of the corresponding operators, their degrees are minimal. The coincidence of these degrees with those of the identities \eqref{I_k's: hC_(alpha,pm alpha) for even nu}, \eqref{I_k's: hC_(alpha,pm alpha) for odd nu}, and \eqref{I2 char iden} implies that the latter are also the characteristic identities of the operators $\hC_{\epsilon\epsilon'}$ and $\hC_\rho$.
\end{remark}

\section{Colour factors of ladder Feynman diagrams in a gauge theory with fermions transforming in spinor representations of the gauge group $\Spin(2r)$}\label{Sec: diagrams}

To illustrate the application of the split Casimir operator to the calculation of colour factors of Feynman diagrams in non-Abelian gauge theories, we consider a Yang–Mills theory with gauge group $\Spin(2r)$ and fermionic fields $\psi^\alpha$ and $\overline{\psi}_\alpha$,\footnote{Here and in what follows the spinor indices $\alpha$, $\beta,\dots$ refer to the gauge group $\Spin(2r)$ rather than to the group of space–time symmetries.} which transform in its spinor representation $\Delta_+$. The Lagrangian of this theory is given by (see, e.g., \cite{PesSchr})
\begin{equation}\label{Lagrangian}
L=\frac{1}{4}\kg_{i_1i_2,j_1j_2}F^{\mu\nu,i_1i_2}F_{\mu\nu}^{j_1j_2}+\overline{\psi}(i\gamma^\mu D_\mu-m)\psi,
\end{equation}
where the covariant derivative $D_\mu$ and the field-strength tensor $F_{\mu\nu}$ are defined as
\begin{equation}\label{Covariant derivative}
D_\mu=\partial_\mu+ g A_\mu^{ij}\Delta_+(M_{ij}),
\end{equation}
\begin{equation}\label{Field strength tensor}
F_{\mu\nu}^{i_1i_2}=\partial_\mu A_\nu^{i_1i_2}-\partial_\nu A_\mu^{i_1i_2}
+ X^{i_1i_2}{}_{j_1j_2,k_1k_2}A_\mu^{j_1j_2} A_\nu^{k_1k_2}.
\end{equation}
Here $g$ is the coupling constant, $X^{i_1i_2}{}_{j_1j_2,k_1k_2}$ and $\kg_{i_1i_2,j_1j_2}$ denote respectively the structure constants \eqref{soStructureConstants} and the Cartan–Killing metric \eqref{soKillingMetric} of the Lie algebra $so_{2r}$, and the anti-Hermitian\footnote{It is worth noting that in the physics literature the basis elements of the Lie algebra of the gauge group are usually taken to be Hermitian rather than anti-Hermitian. In that case, the second term in \eqref{Covariant derivative} is multiplied by the imaginary unit.} operators $\Delta_+(M_{ij})$ form its basis in the representation $\Delta_+$. Space–time indices are raised and lowered using the Minkowski metric $\eta_{\mu\nu}=\diag(1,-1,-1,-1)$.

In what follows, we shall work in the Feynman gauge. The corresponding Feynman rules (apart from the three- and four-gluon vertices) are shown in Fig.~\ref{FeynmanRules}.
\begin{figure}
\centering
\begin{tikzpicture}
\node(fermion propagator equals) {$=$};
\begin{feynman}
\vertex[left=of fermion propagator equals] (j) {$\beta$};
\vertex [left of=j, xshift=-1cm] (i) {$\alpha$};

\diagram*{
(i) -- [fermion, edge label=$p$] (j),
};
\end{feynman}

\node[right=of fermion propagator equals]{$\displaystyle \delta^\alpha_\beta\frac{i}{p^\mu\gamma_\mu-m+i\varepsilon}$};
\node[below=of fermion propagator equals] (gluon propagator equals) {$=$};

\begin{feynman}
\vertex[left= of gluon propagator equals] (B) {$(j_1,j_2),\nu$};
\vertex[left= of B, xshift=-1.5cm] (A) {$(i_1,i_2),\mu$};

\diagram*{
(A) --[boson, edge label=$p$] (B)
};
\end{feynman}

\node[right=of gluon propagator equals]{$\displaystyle \okg^{i_1i_2,j_1j_2}\frac{i\eta^{\mu\nu}}{p^2+i\varepsilon}$};
\node[below=of gluon propagator equals] (FGF propagator equals) {$=$};

\begin{feynman}
\vertex[left=of FGF propagator equals] (j vertex) {$\beta$};
\vertex[left=of j vertex, xshift=0.2cm] (vertex);
\vertex[left=of vertex, xshift=0.4cm] (k vertex) {$\alpha$};
\vertex[below=of vertex] (A vertex) {$(i_1,i_2)$};

\diagram*{
(k vertex) --[fermion] (vertex) --[fermion] (j vertex),
(vertex) --[boson] (A vertex),
};
\end{feynman}

\node[right = of FGF propagator equals]{$\displaystyle i g \gamma^\mu (S_{i_1i_2})^\alpha{}_\beta$};

\end{tikzpicture}
\caption{Feynman rules for the theory with Lagrangian \eqref{Lagrangian}. If the representations $\Delta_+$ and $\overline{\Delta}_+$ are equivalent, the fermion lines should be treated as unoriented.}\label{FeynmanRules}
\end{figure}
\begin{remark}
The positive sign in front of the kinetic term of the gauge field in the Lagrangian \eqref{Lagrangian} is due to the fact that we employ the negative-definite Cartan–Killing metric $\kg_{i_1i_2,j_1j_2}$, which is convenient for demonstrating the role of the split Casimir operator in the calculation of colour factors.

In general, the Lagrangian is written using the rescaled metric $\kg'_{i_1i_2,j_1j_2}:=T_F\kg_{i_1i_2,j_1j_2}$, where $T_F$ is a constant. Under the replacement $\kg_{i_1i_2,j_1j_2}\mapsto \kg'_{i_1i_2,j_1j_2}$ in the Feynman rules, the colour factor of the gluon propagator is divided by $T_F$, whereas the colour factors of the three- and four-gluon vertices are multiplied by $T_F$.

Accordingly, the correct power $k$ in the overall coefficient $T_F^k$ of an arbitrary Feynman diagram in the theory with metric $\kg'_{i_1i_2,j_1j_2}$ can be recovered from the corresponding diagram in the theory with metric $\kg_{i_1i_2,j_1j_2}$ by counting the number $n_{\text{pr}}$ of gluon propagators and the number $n_{3,4}$ of three- and four-gluon vertices: $k = n_{3,4} - n_{\text{pr}}$.
\end{remark}

As an illustration of the application of the split Casimir operator \eqref{CasDef} to the calculation of colour factors, we consider the components $(\hC_{++})^{\alpha_1\alpha_2}{}_{\beta_1\beta_2}$ of the operator $\hC_{++}:=(\Delta_+\otimes \Delta_+)\hC$ in the tensor product of two spinor representations of positive chirality:
\begin{equation}
\hC_{++}=\okg^{k_1k_2,m_1m_2}\Delta_+(M_{k_1k_2})^{\alpha_1}{}_{\beta_1}\otimes \Delta_+(M_{m_1m_2})^{\alpha_2}{}_{\beta_2}.
\end{equation}
These components coincide with the colour factor of the diagram shown in Fig.~\ref{Split Casimir spinor alpha,alpha graphic representation}.  Accordingly, the components $(\hC_{++}^L)^{\alpha_1\alpha_2}{}_{\beta_1\beta_2}$ of $\hC_{++}^L$ coincide with the colour factor of the ladder diagram shown in Fig.~\ref{fermion-fermion scattering}, which describes the interaction of fermions via the exchange of $L$ gluons.

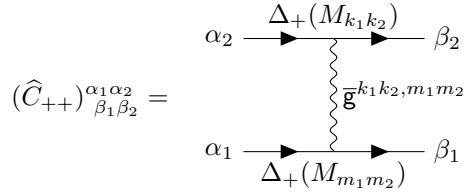
\begin{figure}
\centering
\begin{tikzpicture}

\begin{feynman}
\vertex(i2){$\alpha_2$};
\vertex[below of=i2](i1){$\alpha_1$};
\vertex[right of=i2,label=above:$\Delta_+(M_{k_1k_2})$](a2);
\vertex[right of=i1,label=below:$\Delta_+(M_{m_1m_2})$](a1);
\vertex[right of=a2](j2){$\beta_2$};
\vertex[right of=a1](j1){$\beta_1$};

\diagram{
(i2)--[fermion](a2)--[fermion](j2),
(i1)--[fermion](a1)--[fermion](j1),
(a1)--[boson, edge label'=$\okg^{k_1k_2,m_1m_2}$](a2)
};
\end{feynman}

\node[above left of=i1,xshift=-1cm]{$(\hC_{++})^{\alpha_1\alpha_2}_{\ \beta_1\beta_2}=$};
\end{tikzpicture}
\caption{Graphical interpretation of the operator $\hC_{++}$. For even $r$, the fermion lines should be regarded as unoriented (see the isomorphisms \eqref{Delta bar(Delta) iso}).}
\label{Split Casimir spinor alpha,alpha graphic representation}
\end{figure}
\begin{figure}
\centering
\begin{tikzpicture}
\begin{feynman}
\vertex(i1){$\alpha_2$};
\vertex[below of=i1](i2){$\alpha_1$};
\vertex[right of=i1](k1);
\vertex[right of=i2](k2);
\vertex[right of=k1](k3);
\vertex[right of=k2](k4);
\vertex[right of=k3](k5);
\vertex[right of=k4](k6);
\vertex[right of=k5](k99);
\vertex[right of=k6](k100);
\vertex[right of=k99](k101);
\vertex[right of=k100](k102);
\vertex[right of=k101](j1){$\beta_2$};
\vertex[right of=k102](j2){$\beta_1$};

\diagram*{
(i1)--[fermion](k1),
(i2)--[fermion](k2),
(k1)--[boson](k2),
(k1)--[fermion](k3),
(k2)--[fermion](k4),
(k3)--[boson](k4),
(k3)--[fermion](k5),
(k4)--[fermion](k6),

(k99)--[fermion](k101),
(k100)--[fermion](k102),
(k101)--[boson](k102),
(k101)--[fermion](j1),
(k102)--[fermion](j2)
};

\draw [loosely dotted] (k5)--(k99);
\draw [loosely dotted] (k6)--(k100);
\end{feynman}

\draw[decoration={brace, amplitude=10pt}, decorate] ($(k102.south east)+(10pt,-10pt)$)--($(k2.south west)+(-10pt,-10pt)$) node[pos=0.5,below,yshift=-10pt]{$L$};
\end{tikzpicture}
\caption{Graphical interpretation of the operator $\hC_{++}^L$ in terms of a Feynman diagram describing fermion interaction via gluon exchange.}
\label{fermion-fermion scattering}
\end{figure}
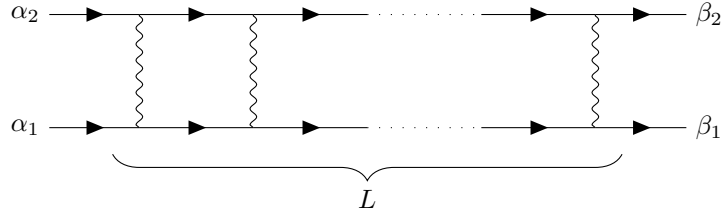

To compute them, we use the characteristic identities of the operator $\hC_{++}$ given on the right-hand sides of \eqref{hC_(alpha,pm alpha) for even nu} and \eqref{hC_(alpha,pm alpha) for odd nu} for even and odd $r$, respectively, together with the expansion \eqref{Split Casimir to the L expansion} of an arbitrary power $L$ of $\hC_{++}$ in terms of the projectors $\proj_{k}^{++}$ onto its invariant subspaces. We thus obtain:
\begin{align}
(\hC_{++}^L)^{\alpha_1\alpha_2}{}_{\beta_1\beta_2}
&=\sum_{m=0}^{\frac{r}{2}} c_{(2),2m}^L (\proj^{++}_{2m})^{\alpha_1\alpha_2}{}_{\beta_1\beta_2}
&& \text{for even } r,\\
(\hC_{++}^L)^{\alpha_1\alpha_2}{}_{\beta_1\beta_2}
&=\sum_{m=1}^{\frac{r+1}{2}} c_{(2),2m-1}^L (\proj^{++}_{2m-1})^{\alpha_1\alpha_2}{}_{\beta_1\beta_2}
&& \text{for odd } r,
\end{align}
where the numbers $c_{(2),k}$ are defined in \eqref{hC eigenvalues on Delta times Delta}.

One possible way of closing the diagram in Fig.~\ref{fermion-fermion scattering} to obtain a vacuum diagram is shown in Fig.~\ref{fermion-fermion scattering vacuum}. The colour factor of this diagram coincides with the trace $(\hC_{++}^L)^{\alpha_1\alpha_2}{}_{\alpha_1\alpha_2} \equiv \tr \hC_{++}^L$, which is computed using expressions \eqref{projectors traces for even nu} and \eqref{projectors traces for odd nu} for the traces of the projectors $\proj^{++}_{k}$:
\begin{equation}
\tr \hC_{++}^L =\sum_{m=0}^{\frac{r}{2}-1} c_{(2),2m}^L \frac{(2r)!}{(2m)!(2r-2m)!} + c_{(2),r}^L \frac{(2r)!}{2(r!)^2} \qquad \text{for even }
\end{equation}
\begin{equation}
\tr \hC_{++}^L =\sum_{m=1}^{\frac{r-1}{2}} c_{(2),2m-1}^L \frac{(2r)!}{(2m-1)!(2r-2m+1)!} + c_{(2),r}^L \frac{(2r)!}{2(r!)^2} \qquad \text{for odd } r,
\end{equation}
where, as before, the coefficients $c_{(2),k}$ are defined in \eqref{hC eigenvalues on Delta times Delta}.
\begin{figure}
\centering
\begin{tikzpicture}
\begin{feynman}
\vertex(i1);
\vertex[below of=i1](i2);
\vertex[right of=i1](k1);
\vertex[right of=i2](k2);
\vertex[right of=k1](k3);
\vertex[right of=k2](k4);
\vertex[right of=k3](k5);
\vertex[right of=k4](k6);
\vertex[right of=k5](k99);
\vertex[right of=k6](k100);
\vertex[right of=k99](k101);
\vertex[right of=k100](k102);
\vertex[right of=k101](j1);
\vertex[right of=k102](j2);

\vertex[above of=i1](i1p);
\vertex[right of=i1p](k1p);
\vertex[right of=k1p](k3p);
\vertex[right of=k3p](k5p);
\vertex[right of=k5p](k99p);
\vertex[right of=k99p](k101p);
\vertex[right of=k101p](j1p);

\vertex[below of=i2](i2p);
\vertex[right of=i2p](k2p);
\vertex[right of=k2p](k4p);
\vertex[right of=k4p](k6p);
\vertex[right of=k6p](k100p);
\vertex[right of=k100p](k102p);
\vertex[right of=k102p](j2p);

\diagram*{
(i1)--[fermion](k1),
(i2)--[fermion](k2),
(k1)--[boson](k2),
(k1)--[fermion](k3),
(k2)--[fermion](k4),
(k3)--[boson](k4),
(k3)--[fermion](k5),
(k4)--[fermion](k6),

(k99)--[fermion](k101),
(k100)--[fermion](k102),
(k101)--[boson](k102),
(k101)--[fermion](j1),
(k102)--[fermion](j2),

(i1p)--[fermion,bend right=90](i1),
(k1p)--[fermion](i1p),
(k3p)--[fermion](k1p),
(k5p)--[fermion](k3p),
(k101p)--[fermion](k99p),
(j1p)--[fermion](k101p),
(j1)--[fermion,bend right=90](j1p),

(i2p)--[fermion,bend left=90](i2),
(k2p)--[fermion](i2p),
(k4p)--[fermion](k2p),
(k6p)--[fermion](k4p),
(k102p)--[fermion](k100p),
(j2p)--[fermion](k102p),
(j2)--[fermion,bend left=90](j2p)
};

\draw [loosely dotted] (k5)--(k99);
\draw [loosely dotted] (k6)--(k100);
\draw [loosely dotted] (k5p)--(k99p);
\draw [loosely dotted] (k6p)--(k100p);
\end{feynman}

\draw[decoration={brace, amplitude=10pt}, decorate] ($(k102.south east)+(10pt,-10pt)$)--($(k2.south west)+(-10pt,-10pt)$) node[pos=0.5,below,yshift=-10pt]{$L$};
\end{tikzpicture}
\caption{Vacuum Feynman diagram obtained by closing the diagram in Fig. \ref{fermion-fermion scattering}.}\label{fermion-fermion scattering vacuum}
\end{figure}

The Feynman diagram shown in Fig.~\ref{fermion-fermion scattering} can also be partially closed by joining, for example, the two upper external lines. The resulting diagram is shown in Fig.~\ref{fermion-fermion scattering semivacuum}. Its colour factor is given by the partial trace $(\hC_{++}^L)^{\alpha_1\alpha_2}{}_{\beta_1\alpha_2} \equiv (\tr_2 \hC_{++}^{L})^{\alpha_1}{}_{\beta_1}$ of the operator $\hC_{++}^L$, where $\tr_2$ denotes the trace over the second tensor component of the representation space $\Delta_+ \otimes \Delta_+$.

To compute this colour factor, it suffices to determine the auxiliary partial traces $\tr_2 \proj_{k}^{++}$. Since these partial traces are invariant operators acting on the space of the irreducible representation $\Delta_+$, each of them is proportional to the identity operator $I_{\Delta_+}$:
\begin{equation}
\tr_2 \proj_k^{++} = a_k^{++} I_{\Delta_+},
\end{equation}
where the coefficients $a_k^{++}$ are to be determined. Taking the full trace of both sides of this equality, we obtain
\begin{equation}
\tr \proj_k^{++}
=
a_k^{++} \tr I_{\Delta_+}
\qquad \implies \qquad
a_k^{++}
=
\frac{\tr \proj_k^{++}}{\tr I_{\Delta_+}}.
\end{equation}

Using expressions \eqref{proj^(alpha,pm alpha) for even nu} and \eqref{proj^(alpha,pm alpha) for odd nu} for $\tr \proj_{k}^{++}$, together with $\tr I_{\Delta_+} = \dim \Delta_+ = 2^{r-1}$, we finally obtain
\begin{equation}
\tr_2 \hC_{++}^L
=
\sum_{m=0}^{\frac{r}{2}-1}
c_{(2),2m}^L
\frac{(2r)!}{2^{r-1}(2m)!(2r-2m)!}
I_{\Delta_+}
+
c_{(2),r}^L
\frac{(2r)!}{2^r (r!)^2}
I_{\Delta_+}
\qquad
\text{for even } r,
\end{equation}
\begin{equation}
\tr_2 \hC_{++}^L
=
\sum_{m=1}^{\frac{r-1}{2}}
c_{(2),2m-1}^L
\frac{(2r)!}{2^{r-1}(2m-1)!(2r-2m+1)!}
I_{\Delta_+}
+
c_{(2),r}^L
\frac{(2r)!}{2^r (r!)^2}
I_{\Delta_+}
\qquad
\text{for odd } r.
\end{equation}

\begin{figure}
\centering
\begin{tikzpicture}
\begin{feynman}
\vertex(i1);
\vertex[below of=i1](i2){$\alpha_1$};
\vertex[right of=i1](k1);
\vertex[right of=i2](k2);
\vertex[right of=k1](k3);
\vertex[right of=k2](k4);
\vertex[right of=k3](k5);
\vertex[right of=k4](k6);
\vertex[right of=k5](k99);
\vertex[right of=k6](k100);
\vertex[right of=k99](k101);
\vertex[right of=k100](k102);
\vertex[right of=k101](j1);
\vertex[right of=k102](j2){$\beta_1$};

\vertex[above of=i1](i1p);
\vertex[right of=i1p](k1p);
\vertex[right of=k1p](k3p);
\vertex[right of=k3p](k5p);
\vertex[right of=k5p](k99p);
\vertex[right of=k99p](k101p);
\vertex[right of=k101p](j1p);

\vertex[below of=i2](i2p);
\vertex[right of=i2p](k2p);
\vertex[right of=k2p](k4p);
\vertex[right of=k4p](k6p);
\vertex[right of=k6p](k100p);
\vertex[right of=k100p](k102p);
\vertex[right of=k102p](j2p);

\diagram*{
(i1)--[fermion](k1),
(i2)--[fermion](k2),
(k1)--[boson](k2),
(k1)--[fermion](k3),
(k2)--[fermion](k4),
(k3)--[boson](k4),
(k3)--[fermion](k5),
(k4)--[fermion](k6),

(k99)--[fermion](k101),
(k100)--[fermion](k102),
(k101)--[boson](k102),
(k101)--[fermion](j1),
(k102)--[fermion](j2),

(i1p)--[fermion,bend right=90](i1),
(k1p)--[fermion](i1p),
(k3p)--[fermion](k1p),
(k5p)--[fermion](k3p),
(k101p)--[fermion](k99p),
(j1p)--[fermion](k101p),
(j1)--[fermion,bend right=90](j1p)
};

\draw [loosely dotted] (k5)--(k99);
\draw [loosely dotted] (k6)--(k100);
\draw [loosely dotted] (k5p)--(k99p);
\draw [loosely dotted] (k6p)--(k100p);
\end{feynman}

\draw[decoration={brace, amplitude=10pt}, decorate] ($(k102.south east)+(10pt,-10pt)$)--($(k2.south west)+(-10pt,-10pt)$) node[pos=0.5,below,yshift=-10pt]{$L$};
\end{tikzpicture}
\caption{Vacuum Feynman diagram obtained by closing the diagram shown in Fig. \ref{fermion-fermion scattering}.}\label{fermion-fermion scattering semivacuum}
\end{figure}

\section{Solutions of the Yang--Baxter equation invariant under the action of $so_{2r}$ in spinor representations}\label{Sec: YBE}

It was shown in \cite{ShanWit,ChDerIs,IsKarKir2} that the operator
\begin{equation}\label{ShanWit_YBE_solution}
\hat{R}(u)=\sum_{k=0}^{\infty}\frac{\hsR_k(u)}{k!}I_k,
\end{equation}
acting on the space $V_\rho\otimes V_\rho$ of the representation $\rho\otimes \rho$ of $so_{2r}$, where $\rho=\Delta_++\Delta_-$ is the spinor representation, satisfies the quantum Yang--Baxter equation in the braid form
\begin{equation}\label{YBEbraid}
\hR_{12}(u)\hR_{23}(u+v)\hR_{12}(v)=\hR_{23}(v)\hR_{12}(u+v)\hR_{23}(u),
\end{equation}
provided that the coefficients $\hsR_k(u)$ in \eqref{ShanWit_YBE_solution} satisfy
\begin{equation}\label{ShanWit_YBE_R_k_conditions}
\hsR_{k+2}(u)=\frac{k+u}{k+2-2r-u}\hsR_k(u).
\end{equation}
The indices $a$ and $b$ in the operators $\hR_{ab}(u)$ in \eqref{YBEbraid} label the tensor factors on which $\hR(u)$ acts non-trivially; for example, -$\hR_{12}(u):=\hR(u)\otimes I$.

Clearly, the relations \eqref{ShanWit_YBE_R_k_conditions} for the coefficients $\hsR_k(u)$ split into two independent families: one for even $k$ and one for odd $k$. Their general solutions can be written in the form \cite{ShanWit,ChDerIs,IsKarKir2}\footnote{The solution obtained in \cite{IsKarKir2} differs from \eqref{R_k explicit} by an overall factor.}
\begin{equation}\label{R_k explicit}
\hsR_{2k}(u)=A(u)(-1)^k\frac{\Gamma(\frac{u}{2}+k)\Gamma(\frac{u}{2}+r-k)}{\Gamma(\frac{u}{2})\Gamma(\frac{u}{2})},\quad
\hsR_{2k+1}(u)=B(u)(-1)^k\frac{\Gamma(\frac{u+1}{2}+k)\Gamma(\frac{u-1}{2}+r-k)}{2\Gamma(\frac{u+1}{2})\Gamma(\frac{u+1}{2})},
\end{equation}
where $A(u)$ and $B(u)$ are arbitrary functions of the spectral parameter $u$. Accordingly, the solution \eqref{ShanWit_YBE_solution} of \eqref{YBEbraid} splits naturally into even and odd parts\footnote{In \cite{ChDerIs}, the even and odd parts of the $R$-operator \eqref{ShanWit_YBE_solution} were denoted by $R^+(u)$ and $R^-(u)$.}:
\begin{equation}\label{ShanWit_sym+antisym}
\hR(u)=\hR^S(u)+\hR^{AS}(u),
\qquad
\hR^S(u)=\sum_{k=0}^\infty \frac{\hsR_{2k}(u)}{(2k)!}I_{2k},
\qquad
\hR^{AS}(u)=\sum_{k=0}^\infty \frac{\hsR_{2k+1}(u)}{(2k+1)!}I_{2k+1}.
\end{equation}

To clarify the meaning of the operators $\hR^{S}(u)$ and $\hR^{AS}(u)$, we introduce the projectors
\begin{align}
\proj^{S}&:=\proj_{++}+\proj_{--}
=\frac{1}{2}(I\otimes I+\Gamma_{2r+1}\otimes \Gamma_{2r+1}),\label{proj^+}\\
\proj^{AS}&:=\proj_{+-}+\proj_{-+}
=\frac{1}{2}(I\otimes I-\Gamma_{2r+1}\otimes \Gamma_{2r+1})\label{proj^-}
\end{align}
from the representation $\rho\otimes \rho=(\Delta_++\Delta_-)\otimes (\Delta_++\Delta_-)$ onto its subrepresentations $\Delta^S$ and $\Delta^{AS}$ defined in \eqref{Delta^S and Delta^AS def}. Recall that $\proj_{\epsilon\epsilon'}:=\proj_\epsilon\otimes \proj_{\epsilon'}$ for $\epsilon,\epsilon'=\pm$, where $\proj_\epsilon$ and $\proj_{\epsilon'}$ are defined in \eqref{rho to Delta_pm projectors}. It is straightforward to verify that
\begin{equation}\label{hR^S and hR^AS projection properties}
\proj^{S}\hR(u)=\hR^{S}(u)=\hR(u)\proj^{S},\qquad
\proj^{AS}\hR(u)=\hR^{AS}(u)=\hR(u)\proj^{AS}.
\end{equation}
Thus, the even part $\hR^{S}(u)$ of the operator $\hR(u)$ acts on the representation space of $\Delta^S$, whereas the odd part $\hR^{AS}(u)$ acts on the representation space of $\Delta^{AS}$.

The solution \eqref{ShanWit_YBE_solution} of equation \eqref{YBEbraid} was obtained in \cite{ShanWit} by successively solving simpler equations known as the $RLL$ relations. $R$-operators acting on the spaces $V_f\otimes V_f$, $V_\rho\otimes V_f$, and $V_\rho\otimes V_\rho$ were constructed, where $V_f$ denotes the space of the defining representation of $so_{2r}$. This approach to solving \eqref{YBEbraid}, however, is rather cumbersome and involves substantial computational difficulties. The final part of the present work is dedicated to deriving the even part $R^{S}(u)$ of this solution by an alternative method proposed in \cite{MacKay} (see also \cite{IsQGLect}, Section~3.13, and \cite{West}, where this method was applied to the construction of $R$-matrices for the exceptional Lie algebras arising from the Freudenthal--Tits magic square). Strictly speaking, this method applies only to the construction of symmetric $R$-matrices acting on tensor squares of irreducible representations. Nevertheless, its extension to the case of the representation $\Delta^{S}$ presents no difficulty. The case of the representation $\Delta^{AS}$ is more involved and is not considered in the present work.

\subsection{Method of solving the Yang--Baxter equation based on properties of Casimir operators}

In this subsection, we consider the general case of a simple Lie algebra $\mathfrak{g}$ and its irreducible representation $T$ acting on the space $V_T$. To formulate the method mentioned above, we use a form of the Yang--Baxter equation alternative to \eqref{YBEbraid}:
\begin{equation}\label{YBE}
R_{12}(u)R_{13}(u+v)R_{23}(v)=R_{23}(v)R_{13}(u+v)R_{12}(u).
\end{equation}
Here $R(u):V_T\otimes V_T\to V_T\otimes V_T$, and the indices $a$, $b$ of the operator $R_{ab}(u)$, as before, label the tensor factors on which $R(u)$ acts non-trivially. Equations \eqref{YBEbraid} and \eqref{YBE} are related by the substitution $\hR(u)=PR(u)$, where $P:V_T\otimes V_T\to V_T\otimes V_T$ denotes the permutation operator.

We assume that the decomposition $T\otimes T=\sum{\lambda} T_\lambda$ of $T\otimes T$ into irreducible components is multiplicity-free. Then any $\mathfrak{g}$-invariant operator $R(u):V_T\otimes V_T\to V_T\otimes V_T$ can be written in the form
\begin{equation}\label{R-general expansion}
R(u)=\sum_{T_\lambda\subseteq T\otimes T}\tau_\lambda(u)\proj_\lambda,
\end{equation}
where $\proj_\lambda$ are mutually orthogonal projectors from $V_T\otimes V_T$ onto the subspaces $V_\lambda$ corresponding to the representations $T_\lambda$, and $\tau_\lambda(u)$ are scalar functions.

\begin{proposition}[\cite{MacKay,IsQGLect}]\label{R-matrix general method}
Let $R(u)$ be the operator given in \eqref{R-general expansion}, which is unitary,
\begin{equation}\label{R-unitarity}
R_{12}(u)R_{21}(-u)=\mathbf{1}\equiv I\otimes I,
\end{equation}
and symmetric,
\begin{equation}\label{R-symmetry}
R_{12}(u)=R_{21}(u),
\end{equation}
where $R_{21}(\pm u):=P R_{12}(\pm u) P$, and suppose that $R(u)$ is a solution of the Yang--Baxter equation \eqref{YBE}. Let $\lambda$ and $\kappa$ be the highest weights of representations $T_\lambda,T_\kappa\subseteq T\otimes T$ such that there exists $X_A\in\mathfrak{g}$ for which
\begin{equation}\label{P_lambda X_a P_kappa}
P_\lambda(I\otimes T(X_A))P_\kappa\neq 0.
\end{equation}
Then the functions $\tau_\lambda(u)$ and $\tau_\kappa(u)$ satisfy
\begin{equation}\label{tau lambda kappa relations}
\frac{\tau_\lambda(u)}{\tau_\kappa(u)}=
\frac{u+\frac{1}{4}(c_{2}^\lambda-c_{2}^\kappa)}
     {u-\frac{1}{4}(c_{2}^\lambda-c_{2}^\kappa)},
\end{equation}
where $c_2^\lambda$ and $c_2^\kappa$ denote the eigenvalues of the quadratic Casimir operator $C_2$ in the representations $T_\lambda$ and $T_\kappa$, respectively.
\end{proposition}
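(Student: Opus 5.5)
The plan is the standard MacKay argument: expand the Yang--Baxter equation to first order around a point where $R$ degenerates to the permutation. Using unitarity and symmetry, I will normalise $R(u)$ so that, up to a scalar factor, $R(0)=\mathbf{1}$. Symmetry \eqref{R-symmetry} together with multiplicity-freeness means each $\proj_\lambda$ is either symmetric or antisymmetric under $P$, and $\tau_\lambda(0)$ can be taken to be $\pm1$ according to that parity. I will then pass to the regime $u,v\to\infty$ (equivalently, rescale the spectral parameter). There one writes
\begin{equation*}
R(u)=\mathbf{1}+\frac{1}{u}\,r+O(u^{-2}),
\end{equation*}
which gives the classical limit of \eqref{YBE}. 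The cleaner route, the one in \cite{MacKay}, is to look instead at the ``tensor product graph'' condition. Differentiating \eqref{YBE} with respect to $v$ at $v=0$, with $R(0)=\mathbf{1}$, gives a Sutherland-type relation
\begin{equation*}
[R_{12}(u)R_{13}(u),\,R'_{23}(0)] = R'_{12}(u)R_{13}(u)-R_{13}(u)R'_{12}(u)\quad(\text{suitably arranged}).
\end{equation*}
The key extra ingredient is that $R'(0)$ is an invariant operator on $V_T\otimes V_T$. I will argue that it must be proportional to the split Casimir $\hC$, up to the identity. This holds at least in the regime where the classical $r$-matrix is $\hC/u$, which follows from \eqref{C_2 and C_(2) relation}.

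Next I expand both sides in the projector basis. Using \eqref{C_2 and C_(2) relation}, I replace $\hC_{23}$ by $\tfrac12\bigl(\Delta C_2\bigr)_{23}-c_2^T$. The relation then becomes
\begin{equation*}
\bigl[\,\hC_{23},\,R_{12}(u)\,\bigr]\ \propto\ u\,\bigl(\partial_u R_{12}(u)\bigr)\cdot(\ldots),
\end{equation*}
which is an intertwining condition of the form
\begin{equation*}
\Bigl[\,u\,(\hC_{12}+\hC_{13})\ \text{or}\ \tfrac{u}{2}\Delta C_2,\ R(u)\Bigr]\ \text{related to}\ [\hC_{23},R(u)].
\end{equation*}
Sandwiching this equation between $\proj_\lambda$ and $\proj_\kappa$ kills everything except terms with $\tau_\lambda$ and $\tau_\kappa$. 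The surviving matrix element is the operator $\proj_\lambda\bigl(I\otimes T(X_A)\bigr)\proj_\kappa$, which arises from $\hC_{23}$ acting with its first leg on the second factor of $V\otimes V$. That operator is assumed nonzero in \eqref{P_lambda X_a P_kappa}, so it can be cancelled. On the remaining scalar side, $\Delta C_2$ acts by $c_2^\lambda$ on the left and $c_2^\kappa$ on the right. This yields the linear relation
\begin{equation*}
\Bigl(u-\tfrac14(c_2^\lambda-c_2^\kappa)\Bigr)\tau_\lambda(u)=\Bigl(u+\tfrac14(c_2^\lambda-c_2^\kappa)\Bigr)\tau_\kappa(u),
\end{equation*}
which is \eqref{tau lambda kappa relations}.

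The hard step is justifying the reduction of \eqref{YBE} to an identity involving only $\hC$ and $R$. In the references this is done by using the ansatz that $R(u)$ is analytic with $R(0)\propto$ identity, and by identifying the first-order term through $\mathfrak g$-invariance plus unitarity. The alternative is the Jimbo-type linear equation
\begin{equation*}
\check R(u)\bigl(u\,T(X)\otimes I+I\otimes T(X)\bigr)=\bigl(u\,I\otimes T(X)+T(X)\otimes I\bigr)\check R(u),
\end{equation*}
which is enough for Yang--Baxter (the ``affinisation'' condition). I would first try to derive this linear equation, with $\hC$ replaced by the Casimir eigenvalues, from unitarity and symmetry, and then take its $\lambda$–$\kappa$ matrix elements. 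The scale $\tfrac14$ comes from the normalisation $\hC=\tfrac12(\Delta C_2-\ldots)$ combined with the chosen normalisation of $u$.
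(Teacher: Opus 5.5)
The paper quotes this proposition from MacKay and from Isaev's lectures without proof, so your argument has to stand on its own. The step that carries all of its content is missing.

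Your end-game is the right one. You sandwich a linear relation between $\proj_\lambda$ and $\proj_\kappa$, use $\proj_\lambda[\hC,Y]\proj_\kappa=(c_{(2)}^\lambda-c_{(2)}^\kappa)\proj_\lambda Y\proj_\kappa$ with $c_{(2)}^\lambda-c_{(2)}^\kappa=\tfrac12(c_2^\lambda-c_2^\kappa)$, and cancel the nonzero matrix element. But the relation to be sandwiched must read $R(u)\bigl(uY_A-\tfrac12[\hC,Y_A]\bigr)=\bigl(uY_A+\tfrac12[\hC,Y_A]\bigr)R(u)$, with $Y_A=T(X_A)\otimes I$ and $\hC=\hC_{T\cdot T}$. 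You never derive it; your displays are placeholders (``suitably arranged'', ``$\propto$'', ``related to''), and each route you suggest fails:
(i) $R(0)\propto\mathbf 1$ is impossible. Setting $v=0$ in the Yang--Baxter equation would force $[R_{12}(u),R_{13}(u)]=0$, whose leading term $u^{-2}[\hC_{12},\hC_{13}]$ is nonzero. Indeed the formula you are proving gives $\tau_\lambda(0)/\tau_\kappa(0)=-1$, i.e.\ $R(0)\propto P$. That is what your ``$\tau_\lambda(0)=\pm1$ according to parity'' actually says, so the proposal contradicts itself here.
(ii) Even at the correct regular point, the Sutherland-type relation involves $PR'(0)$, and this is \emph{not} affine in $\hC$; only the $1/u$ coefficient at $u=\infty$ is $\hC$. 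For the paper's own $R^{\epsilon\epsilon}(u)$ with $r\ge 4$, $PR'(0)$ acts on $\proj^{\epsilon\epsilon}_{r-2k}$ by $-2\sum_{m=1}^{k}\frac{1}{2m-1}$, giving $0,-2,-\tfrac83$ for $k=0,1,2$. By contrast, $c_{(2),r-2k}=\frac{r-8k^2}{16(r-1)}$ is affine in $k^2$. Besides, a derivative at $v=0$ gives a differential relation in $u$, not the algebraic one you need.
(iii) Your ``Jimbo-type'' equation, combined with $\mathfrak g$-invariance, gives $(u-1)R(u)(T(X)\otimes I)=(u-1)(T(X)\otimes I)R(u)$. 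Hence $\tau_\lambda=\tau_\kappa$ whenever $\proj_\lambda(I\otimes T(X_A))\proj_\kappa\neq0$, the opposite of the claim. It lacks the commutator terms $\pm\tfrac12[\hC,T(X)\otimes I]$ (the rational, Yangian-type correction), which are exactly where the Casimir differences come from.

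The information sits in an asymmetric limit, not in the symmetric one $u,v\to\infty$; that one only returns the classical Yang--Baxter equation for $\hC$, which is an identity. Keep $u$ fixed and expand the Yang--Baxter equation in $1/v$ to \emph{second} order, using the normalisation $R(w)=\mathbf 1+\hC/w+a_2/w^2+O(w^{-3})$ stated in the Remark after the proposition.
\begin{itemize}
\item Unitarity plus symmetry give $R(u)R(-u)=\mathbf 1$, hence $a_2=\tfrac12\hC^2$; this is where those two hypotheses are used.
\item Order $v^{-1}$ is invariance, $[R_{12}(u),\hC_{13}+\hC_{23}]=0$.
\item At order $v^{-2}$ the terms $\tfrac12(\hC_{13}+\hC_{23})^2$ cancel for the same reason, leaving
\[ R_{12}(u)\bigl(u\hC_{13}-\tfrac12[\hC_{13},\hC_{23}]\bigr)=\bigl(u\hC_{13}+\tfrac12[\hC_{13},\hC_{23}]\bigr)R_{12}(u). \]
\item Since $[\hC_{13},\hC_{12}+\hC_{23}]=0$, one has $[\hC_{13},\hC_{23}]=[\hC_{12},\hC_{13}]=\okg^{AB}[\hC,Y_A]\otimes T(X_B)$. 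Linear independence of the $T(X_B)$ strips off the third factor, yielding the required relation for each $Y_A$.
\item Sandwich it, using $\proj_\lambda Y_A\proj_\kappa=-\proj_\lambda(I\otimes T(X_A))\proj_\kappa\neq0$ for $\lambda\neq\kappa$. This gives $\tau_\lambda\bigl(u-\tfrac14(c_2^\lambda-c_2^\kappa)\bigr)=\tau_\kappa\bigl(u+\tfrac14(c_2^\lambda-c_2^\kappa)\bigr)$, which is the statement.
\end{itemize}
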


\begin{remark}
The coefficient $\frac{1}{4}$ in the numerator and denominator on the right-hand side of \eqref{tau lambda kappa relations} arises from the fact that in the approach described here the solution of \eqref{YBE} is sought in the form
\begin{equation}\label{R-matrix s.c. limit}
R(u)=\mathbf{1}+\frac{\hC_{T\cdot T}}{u}+O\Big(\frac{1}{u^2}\Big),
\end{equation}
where $\hC_{T\cdot T}$ is the split Casimir operator \eqref{CasDef} of $\mathfrak{g}$ in the representation $T\otimes T$. This choice fixes the freedom of rescaling the spectral parameter by a constant. The ansatz \eqref{R-matrix s.c. limit} is motivated by the fact that its leading term as $u\to\infty$ is the solution $r(u):=\frac{\hC}{u}$ of the quasi-classical Yang--Baxter equation
\begin{equation}
[r_{12}(u),r_{13}(u+v)]
+[r_{13}(u+v),r_{23}(v)]
+[r_{12}(u),r_{23}(v)]=0.
\end{equation}
\end{remark}

\begin{remark}
Strictly speaking, relations \eqref{tau lambda kappa relations} are necessary but not sufficient for the operator \eqref{R-general expansion} to satisfy the Yang–Baxter equation \eqref{YBE}. Nevertheless, in many concrete cases their solution does ensure that the operator \eqref{R-general expansion} satisfies \eqref{YBE} (see, e.g., \cite{IsQGLect}, Section~3.13).
\end{remark}

Let us discuss condition \eqref{P_lambda X_a P_kappa} in greater detail. Since $V_\lambda\subseteq V_T\otimes V_T$ is an invariant subspace, it follows that for any $X_A\in\mfg$, $(I\otimes T(X_A)+T(X_A)\otimes I)\cdot V_\lambda\subseteq V_\lambda$. Therefore, for the orthogonal projectors $\proj_\kappa$ and $\proj_\lambda$ we obtain
\begin{equation}\label{tensor operator def}
\proj_\lambda(I\otimes T(X_A))\proj_\kappa
=\frac{1}{2}\proj_\lambda\big(I\otimes T(X_A)-T(X_A)\otimes I\big)\proj_\kappa.
\end{equation}
The operator $I\otimes T(X_A)-T(X_A)\otimes I$ can be interpreted as a tensor operator of the algebra $\mathfrak{g}$ transforming in the adjoint representation. By the Wigner–Eckart theorem, the matrix \eqref{tensor operator def} is proportional to the Clebsch–Gordan coefficients that intertwine a basis of $V_\lambda$ with a basis of $V_{\ad}\otimes V_\kappa$. Hence, for condition \eqref{P_lambda X_a P_kappa} to hold, it is necessary that
\begin{equation}\label{Wigner-Eckart}
T_\lambda\subseteq \ad\otimes T_\kappa.
\end{equation}

Moreover, note that the invariant projectors $\proj^{\pm}_{T}:=\frac{1}{2}(\mathbf{1}\pm P)$ decompose the representation $T\otimes T$ into its symmetric and antisymmetric parts. Accordingly, each irreducible representation $T_\lambda$ appearing in the decomposition $T\otimes T=\sum_\lambda T_\lambda$ is either symmetric, i.e.\ $\proj^+_{T}\proj_\lambda:=\proj^{+}_\lambda\neq 0$, or antisymmetric, i.e.\ $\proj^-_{T}\proj_\lambda:=\proj^{-}_\lambda\neq 0$. It follows immediately that $\frac{1}{2}\proj_\lambda^\pm\big(I\otimes T(X_A)-T(X_A)\otimes I\big)\proj_\kappa^\pm
=\proj_\lambda^\pm(I\otimes T(X_A))\proj_\kappa^\pm=0$ for all $\lambda$ and $\kappa$. Thus, for condition \eqref{P_lambda X_a P_kappa} to hold, the representations $T_\lambda$ and $T_\kappa$ must have opposite symmetry:
\begin{equation}\label{R-matrix_proj's_symmetry}
V_\lambda\subseteq \proj^+_T V_T^{\otimes 2}\quad \text{and}\quad
V_\kappa\subseteq \proj^-_T V_T^{\otimes 2}
\qquad \text{or}\qquad
V_\lambda\subseteq \proj^-_T V_T^{\otimes 2}\quad \text{and}\quad
V_\kappa\subseteq \proj^+_T V_T^{\otimes 2}.
\end{equation}

We now proceed to apply the method described above to the construction of solutions of the Yang--Baxter equation \eqref{YBE} that are invariant under the action of $so_{2r}$ in the representations $\Delta_\epsilon\otimes \Delta_\epsilon$ for $\epsilon=\pm$.

\subsection{Solutions of the Yang--Baxter equation invariant under the action of the Lie algebra $so_{2r}$ in the representations $\Delta_\pm\otimes \Delta_\pm$}

In accordance with Propositions~\ref{Proposition 6.4.1 from IsRu2} and \ref{R-matrix general method}, we seek solutions of the Yang--Baxter equation \eqref{YBE} in the representation $\Delta_\epsilon\otimes \Delta_\epsilon$ for $\epsilon=\pm$ in the form
\begin{equation}\label{spinor R expansion}
\begin{aligned}
R^{\epsilon\epsilon}(u)&=\tau_0(u)\proj_0^{\epsilon\epsilon}+\tau_2(u)\proj_2^{\epsilon\epsilon}+\dots +\tau_r(u)\proj_r^{\epsilon\epsilon}
\qquad \text{for even }r,\\
R^{\epsilon\epsilon}(u)&=\tau_1(u)\proj_1^{\epsilon\epsilon}+\tau_3(u)\proj_3^{\epsilon\epsilon}+\dots +\tau_r(u)\proj_r^{\epsilon\epsilon}
\qquad \text{for odd }r,
\end{aligned}
\end{equation}
where the projectors $\proj^{\epsilon\epsilon}_k$ are defined in
\eqref{proj^(alpha,pm alpha) for even nu} and
\eqref{proj^(alpha,pm alpha) for odd nu}.
Recall that the image of the projector $\proj_m^{\epsilon\epsilon}$ for $m=0,\dots,r-1$ is the representation space of $T_m$, whereas for $m=r$ it is the representation space of either the self-dual or the anti-self-dual representation $T_r^{+}$ or $T_r^-$, for even and odd $[\frac{r}{2}]$, respectively (see Proposition~\ref{Proposition 6.4.1 from IsRu2}).
It will become clear below that the functions $\tau_k(u)$ for $R^{++}(u)$ and $R^{--}(u)$ coincide; therefore, we do not introduce separate notation for them.

As pointed out in Proposition~\ref{R-matrix general method}, in order to write equations \eqref{tau lambda kappa relations} for $\tau_k(u)$ and $\tau_m(u)$ it suffices to consider only those pairs $\lambda_k$ and $\lambda_m$ (here $\lambda_r$ denotes either $\lambda_r^+$ or $\lambda_r^-$, depending on the parity of $[\frac{r}{2}]$, see Proposition~\ref{Proposition 6.4.1 from IsRu2}) for which conditions \eqref{Wigner-Eckart} and \eqref{R-matrix_proj's_symmetry} hold.

Consider the first of these in greater detail. It is known (see, e.g., \cite{IsRu2}) that the adjoint representation of $so_{2r}$ is described by the Young diagram shown in Fig.~\ref{Young diagrams for ad and T_m} on the left, while the representation $T_m$ is described by the Young diagram shown in Fig.~\ref{Young diagrams for ad and T_m} on the right (in the case $m=r$, this diagram corresponds to the direct sum $T_r=T_r^+\oplus T_r^-$ of the self-dual and anti-self-dual representations). The decomposition of the tensor product $\ad\otimes T_m$ into irreducible components, expressed in terms of Young diagrams and without taking multiplicities into account, is shown in Fig.~\ref{Young diagrams product}. It follows that $T_k$ can appear in $\ad\otimes T_m$ only for $k=m,m\pm 2$. Thus, equations \eqref{tau lambda kappa relations} need only be written for the ratios $\tau_m(u)/\tau_k(u)$ with $m=k+2$, where $k=0,\dots,r-2$.

\begin{figure}
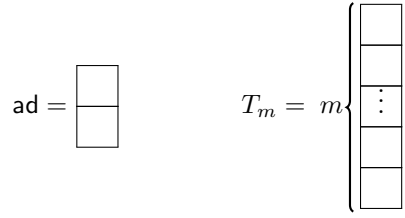

\centering
$\ad=$
\begin{ytableau}
~ \\
~
\end{ytableau}%
\hspace{1.5cm}
$T_m=$\hspace{0.6cm}
\begin{ytableau}
\tikznode{1top}{~}\\
~\\
\vdots\\
~\\
\tikznode{1bottom}{~}
\end{ytableau}
\tikz[overlay,remember picture]{%
\draw[decorate,decoration={brace},thick] ([yshift=-2mm,xshift=-3mm]1bottom.south west) --
([yshift=3.5mm,xshift=-3mm]1top.north west) node[midway,left]{$m$};
}
\caption{Young diagrams describing the representations $\ad$ and $T_m$ of $so_{2r}$.}\label{Young diagrams for ad and T_m}
\end{figure}

\begin{figure}
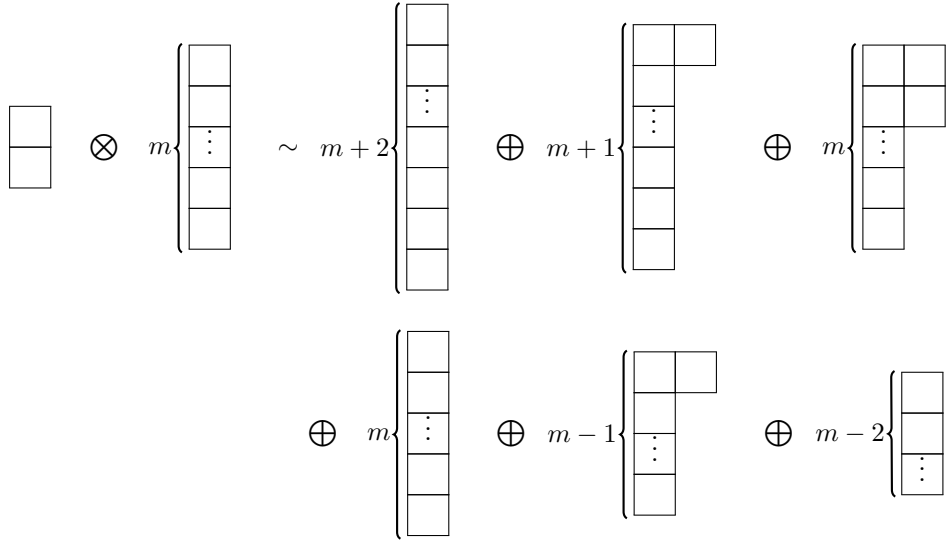

\centering
\begin{ytableau}
~ \\
~
\end{ytableau}%
\hspace{0.5cm}$\bigotimes$\hspace{0.8cm}
\begin{ytableau}
\tikznode{2top}{~}\\
~\\
\vdots\\
~\\
\tikznode{2bottom}{~}
\end{ytableau}
\tikz[overlay,remember picture]{%
\draw[decorate,decoration={brace},thick] ([yshift=-2mm,xshift=-3mm]2bottom.south west) --
([yshift=3.5mm,xshift=-3mm]2top.north west) node[midway,left]{$m$};
}%
\hspace{0.5cm}$\sim$\hspace{1.3cm}
\begin{ytableau}
\tikznode{3top}{~}\\
~\\
\vdots\\
~\\
~\\
~\\
\tikznode{3bottom}{~}
\end{ytableau}
\tikz[overlay,remember picture]{%
\draw[decorate,decoration={brace},thick] ([yshift=-2mm,xshift=-3mm]3bottom.south west) --
([yshift=3.5mm,xshift=-3mm]3top.north west) node[midway,left]{$m+2$};
}%
\hspace{0.5cm}$\bigoplus$\hspace{1.3cm}
\begin{ytableau}
\tikznode{4top}{~} & ~\\
~\\
\vdots\\
~\\
~\\
\tikznode{4bottom}{~}
\end{ytableau}
\tikz[overlay,remember picture]{%
\draw[decorate,decoration={brace},thick] ([yshift=-2mm,xshift=-3mm]4bottom.south west) --
([yshift=3.5mm,xshift=-3mm]4top.north west) node[midway,left]{$m+1$};
}%
\hspace{0.5cm}$\bigoplus$\hspace{0.8cm}
\begin{ytableau}
\tikznode{5top}{~} & ~\\
~ & ~\\
\vdots\\
~\\
\tikznode{5bottom}{~}
\end{ytableau}
\tikz[overlay,remember picture]{%
\draw[decorate,decoration={brace},thick] ([yshift=-2mm,xshift=-3mm]5bottom.south west) --
([yshift=3.5mm,xshift=-3mm]5top.north west) node[midway,left]{$m$};
}

\vspace{0.5cm}
\hspace{3.9cm}$\bigoplus$\hspace{0.8cm}
\begin{ytableau}
\tikznode{5top}{~}\\
~\\
\vdots\\
~\\
\tikznode{5bottom}{~}
\end{ytableau}
\tikz[overlay,remember picture]{%
\draw[decorate,decoration={brace},thick] ([yshift=-2mm,xshift=-3mm]5bottom.south west) --
([yshift=3.5mm,xshift=-3mm]5top.north west) node[midway,left]{$m$};
}%
\hspace{0.5cm}$\bigoplus$\hspace{1.3cm}
\begin{ytableau}
\tikznode{5top}{~} & ~\\
~\\
\vdots\\
\tikznode{5bottom}{~}
\end{ytableau}
\tikz[overlay,remember picture]{%
\draw[decorate,decoration={brace},thick] ([yshift=-2mm,xshift=-3mm]5bottom.south west) --
([yshift=3.5mm,xshift=-3mm]5top.north west) node[midway,left]{$m-1$};
}%
\hspace{0.5cm}$\bigoplus$\hspace{1.3cm}
\begin{ytableau}
\tikznode{6top}{~}\\
~\\
\tikznode{6bottom}{\vdots}
\end{ytableau}
\tikz[overlay,remember picture]{%
\draw[decorate,decoration={brace},thick] ([yshift=-2mm,xshift=-3mm]6bottom.south west) --
([yshift=3.5mm,xshift=-3mm]6top.north west) node[midway,left]{$m-2$};
}%
\caption{Tensor product of the representations $\ad$ and $T_m$ in terms of Young diagrams, without multiplicities.}\label{Young diagrams product}
\end{figure}

Let us now examine condition \eqref{R-matrix_proj's_symmetry} in greater detail.
Proposition~\ref{Proposition 6.4.1 from IsRu2} implies that for each representation
$T_k\subseteq \Delta_\epsilon\otimes \Delta_\epsilon$ there exists exactly one
(up to an overall constant factor) $\mathfrak{g}$-invariant map
$V_{\Delta_\epsilon}\otimes V_{\Delta_\epsilon}\to V_k$.
To construct this map, we introduce the Dirac conjugation of spinors%
\footnote{Here the indices $\alpha$ and $\beta$ refer to the representation $\rho$, not merely to its subrepresentation $\Delta_+$, as in Section~\ref{Sec: diagrams}.}
\begin{equation}
\psi\mapsto \widetilde{\psi}:=\psi^T \mathbf{C}^{-1}
\quad \iff\quad
\psi^\alpha\mapsto \widetilde{\psi}_\alpha
=\psi^\beta (\mathbf{C}^{-1})_{\beta\alpha},
\end{equation}
where $\psi\in V_\rho$ and $\mathbf{C}$ is the charge conjugation matrix.
Restricting this operation to spinors
$\psi_\epsilon,\phi_\epsilon\in V_{\Delta_\epsilon}\subseteq V_\rho$,
we can write the required map in the standard form (see, e.g., \cite{IsRu2}):
\begin{equation}\label{Delta otimes Delta to V_k}
\psi_\epsilon^\alpha\otimes \phi_\epsilon^\beta
\longmapsto
(\widetilde{\psi}_\epsilon)_\alpha
(\Gamma_{[i_1}\dots \Gamma_{i_k]})^\alpha{}_{\beta}
\phi_\epsilon^\beta.
\end{equation}
One readily derives the following transformation properties of the map
\eqref{Delta otimes Delta to V_k} under the permutation
$\psi\leftrightarrow \phi$ for the representations $T_k$ and $T_{k+2}$:
\begin{equation}\label{Projectors symmetries}
\begin{aligned}
(\widetilde{\psi}_\epsilon)_\alpha
(\Gamma_{[i_1}\dots \Gamma_{i_k]})^\alpha{}_{\beta}
\phi_\epsilon^\beta
&=(-1)^{[\frac{r+1}{2}]+\frac{k(k+1)}{2}}
(\widetilde{\phi}_\epsilon)_\alpha
(\Gamma_{[i_1}\dots \Gamma_{i_k]})^\alpha{}_{\beta}
\psi_\epsilon^\beta,\\
(\widetilde{\psi}_\epsilon)_\alpha
(\Gamma_{[i_1}\dots \Gamma_{i_{k+2}]})^\alpha{}_{\beta}
\phi_\epsilon^\beta
&=-(-1)^{[\frac{r+1}{2}]+\frac{k(k+1)}{2}}
(\widetilde{\phi}_\epsilon)_\alpha
(\Gamma_{[i_1}\dots \Gamma_{i_{k+2}]})^\alpha{}_{\beta}
\psi_\epsilon^\beta.
\end{aligned}
\end{equation}
Thus, the representations $T_k$ and $T_{k+2}$ for $k=0,\dots,r-2$ always have opposite symmetry.
Consequently, condition \eqref{R-matrix_proj's_symmetry} imposes no additional constraints on the system of equations \eqref{tau lambda kappa relations} beyond those already implied by condition \eqref{Wigner-Eckart}.

Finally, using the explicit expressions \eqref{c_2 on T_f^(wedge k)} and
\eqref{c_2 on T_pm^(wedge nu)} for the eigenvalues of the quadratic Casimir
operator of $so_{2r}$, we obtain the following form of relations
\eqref{tau lambda kappa relations}:
\begin{equation}\label{tau_k+2 tau_k conditions 1}
\frac{\tau_{k+2}(u)}{\tau_k(u)}
=
\frac{4(r-1)u+(r-1-k)}{4(r-1)u-(r-1-k)}
\end{equation}
for $k=0,2,\dots,r-2$ when $r$ is even and $k=1,3,\dots,r-2$ when $r$ is odd.
Using the freedom of rescaling the spectral parameter,
$u\mapsto \frac{u}{4(r-1)}$, we can rewrite
\eqref{tau_k+2 tau_k conditions 1} in a simpler form
\begin{equation}\label{tau_k+2 tau_k conditions 2}
\frac{\tau_{k+2}(u)}{\tau_k(u)}
=
\frac{u+(r-1-k)}{u-(r-1-k)}.
\end{equation}
Solving equations \eqref{tau_k+2 tau_k conditions 2} successively, starting
from $k=r-2$, and substituting the result into
\eqref{spinor R expansion}, we obtain the following solutions of the
Yang--Baxter equation \eqref{YBE}:
\begin{equation}\label{R-matrix solutions for even r}
R^{\epsilon\epsilon}(u)
=
\proj_r^{\epsilon\epsilon}
+\frac{u-1}{u+1}\proj_{r-2}^{\epsilon\epsilon}
+\frac{(u-1)(u-3)}{(u+1)(u+3)}\proj_{r-4}^{\epsilon\epsilon}
+\dots
+\frac{(u-1)(u-3)\dots(u-(r-1))}
      {(u+1)(u+3)\dots(u+(r-1))}
\proj_0^{\epsilon\epsilon}
\end{equation}
for even $r$, and
\begin{equation}\label{R-matrix solutions for odd r}
R^{\epsilon\epsilon}(u)
=
\proj_r^{\epsilon\epsilon}
+\frac{u-1}{u+1}\proj_{r-2}^{\epsilon\epsilon}
+\frac{(u-1)(u-3)}{(u+1)(u+3)}\proj_{r-4}^{\epsilon\epsilon}
+\dots
+\frac{(u-1)(u-3)\dots(u-(r-2))}
      {(u+1)(u+3)\dots(u+(r-2))}
\proj_1^{\epsilon\epsilon}
\end{equation}
for odd $r$. The overall factor $\tau_r(u)$ in
\eqref{R-matrix solutions for even r} and
\eqref{R-matrix solutions for odd r} has been set equal to one in order to
satisfy the unitarity condition \eqref{R-unitarity}.
Expressions \eqref{R-matrix solutions for even r} and
\eqref{R-matrix solutions for odd r} can be written more compactly as
\begin{equation}\label{R-matrix solutions for general r}
R^{\epsilon\epsilon}(u)
=
\sum_{k=0}^{[\frac{r}{2}]}
\prod_{m=1}^{k}
\frac{u-(2m-1)}{u+(2m-1)}
\cdot
\proj_{r-2k}^{\epsilon\epsilon}.
\end{equation}

As already noted, the solutions $\hR(u)$ and $R(u)$ of the two forms
\eqref{YBEbraid} and \eqref{YBE} of the Yang--Baxter equation are related by $\hR(u)=PR(u)$.
The symmetry properties \eqref{Projectors symmetries} of the representations
$T_k\subseteq \Delta_\epsilon\otimes \Delta_\epsilon$ imply that $P \proj_{r-2k}^{\epsilon\epsilon}
= (-1)^k\proj_{r-2k}^{\epsilon\epsilon}$. Therefore, the operator
\begin{equation}\label{braid R-matrix solutions for general r}
\hR^{\epsilon\epsilon}(u) = \sum_{k=0}^{[\frac{r}{2}]} (-1)^k \prod_{m=1}^{k} \frac{u-(2m-1)}{u+(2m-1)} \cdot  \proj_{r-2k}^{\epsilon\epsilon}
\end{equation}
is a solution of the Yang--Baxter equation in the braid form \eqref{YBEbraid}.

Observe now that the sum of the operators \eqref{braid R-matrix solutions for general r}
for $\epsilon=+$ and $\epsilon=-$ acts on the same space as the operator
$\hR^S(u)$ defined in \eqref{ShanWit_sym+antisym}.
This observation naturally suggests that these operators are proportional, which leads to the following proposition.

\begin{proposition}\label{R^+ and R^++ + R^-- relation}
The solutions $\hR^S(u)$ and $\hR^{++}(u)+\hR^{--}(u)$ of the Yang--Baxter equation \eqref{YBEbraid} in the representation $\Delta^S=\Delta_+^{\otimes 2}\oplus \Delta_-^{\otimes 2}$ of $so_{2r}$ satisfy
\begin{equation}\label{ShanWit_and_our_solutions}
\hR^S(u) = A(u) \prod_{k=0}^{r-1}(u+k) \cdot \bigl(\hR^{++}(u)+\hR^{--}(u)\bigr),
\end{equation}
where $A(u)$ is the function appearing in the definition of $\hR^S(u)$ via \eqref{R_k explicit} and \eqref{ShanWit_sym+antisym}.
\end{proposition}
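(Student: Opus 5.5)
The plan is to compare the two sides projector by projector. By \eqref{hR^S and hR^AS projection properties}, $\hR^S(u)$ acts on $\Delta^S$ and is built from the invariants $I_{2j}$. By \eqref{I2 via hC} and \eqref{I2kI2 relation}, each $I_{2j}$ is a polynomial in $\hC_\rho$. Hence $\hR^S(u)$ is diagonal in the projectors $\proj_{r-2k}^{S}=\proj_{r-2k}^{++}+\proj_{r-2k}^{--}$ of Proposition~\ref{Prop: char identities for hC in rho otimes rho}:
\begin{equation*}
\hR^S(u)=\sum_{k=0}^{[\frac{r}{2}]}\lambda_{r-2k}(u)\,\proj^S_{r-2k},\qquad
\lambda_m(u)=\sum_{j\ge 0}\frac{\hsR_{2j}(u)}{(2j)!}\,\iota_{2j}(m),
\end{equation*}
where $\iota_{2j}(m)$ is the scalar value of $I_{2j}(\hC_\rho)$ at $\hC_\rho=c_{(2),m}$. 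The $\iota_{2j}(m)$ coincide on the $++$ and $--$ blocks, since the eigenvalue $c_{(2),m}$ is the same there. The claim \eqref{ShanWit_and_our_solutions} is then equivalent to
\begin{equation*}
\lambda_{r-2k}(u)=A(u)\prod_{l=0}^{r-1}(u+l)\,(-1)^k\prod_{m=1}^{k}\frac{u-(2m-1)}{u+(2m-1)}\qquad\text{for all }k .
\end{equation*}

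\emph{Step 1 (eigenvalues of the invariants).} I will compute $\iota_{2j}(m)$ in closed form. The quickest route is to use the recursion \eqref{I2kI2 relation} with $I_2$ replaced by the scalar $-16(r-1)c_{(2),m}$. By \eqref{hC eigenvalues on Delta times Delta}, this scalar equals $r(2r-1)-2m(2r-m)$. This gives a three-term recursion in $j$ whose solution is a (Krawtchouk-type) polynomial. A cleaner equivalent description, which I will check against the recursion, is the following. On the irreducible component $T_m$, the operator $\frac{1}{n!}I_n$ acts by the scalar obtained from the coefficient of $t^n$ in
\begin{equation*}
\pm\, 2^{r}\,(1-t)^m(1+t)^{2r-m}\big/\tbinom{2r}{m} .
\end{equation*}
The overall sign and normalisation are to be fixed by the duality of Lemma~\ref{Lemma 1} and by $\iota_0=1$. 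Before using this formula I will verify it directly on small cases, e.g.\ against the $so_6$ and $so_8$ data listed above.

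\emph{Step 2 (reduce to ratios).} Rather than evaluating every $\lambda_{r-2k}$ separately, I will prove the ratio
\begin{equation*}
\lambda_{m-2}(u)/\lambda_m(u)=-\frac{u-(r-m+1)}{u+(r-m+1)}
\end{equation*}
together with one normalisation, namely the value of $\lambda_r(u)$. For the ratio I will substitute \eqref{R_k explicit} and the Step 1 formula into the definition of $\lambda_m(u)$. After factoring out $A(u)/\Gamma(\tfrac u2)^2$, the sum $\sum_j(-1)^j\Gamma(\tfrac u2+j)\Gamma(\tfrac u2+r-j)\,\iota_{2j}(m)/(2j)!$ should become a terminating hypergeometric series. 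I will then evaluate it by Chu--Vandermonde, or Saalschütz after splitting $(1\pm t)$ into even parts, and expect a product of Pochhammer symbols $(\tfrac u2)_{\cdot}$ and $(\tfrac{u+1}2)_{\cdot}$. The even-$j$ restriction then combines these into $\prod_l(u+l)$ via the duplication formula. Alternatively, the recursion \eqref{ShanWit_YBE_R_k_conditions} in $j$ can be used to telescope the sum. As a consistency check of the spectral-parameter normalisation (no rescaling needed relative to \eqref{tau_k+2 tau_k conditions 2}), I will expand at large $u$ and compare the $1/u$ terms: both should be proportional to $\hC_\rho$ restricted to $\Delta^S$.

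\emph{Step 3 (normalisation).} I will evaluate $\lambda_r(u)$ on the top component $T_r^{\pm}$, where $\iota_{2j}(r)$ has the simplest form, coming from the coefficients of $(1-t^2)^r$. In this case the sum collapses to $A(u)\prod_{l=0}^{r-1}(u+l)$ by the Gamma-function identity $\Gamma(\tfrac u2+\cdot)\Gamma(\tfrac{u+1}2+\cdot)\propto\Gamma(u+\cdot)$. The main obstacle I expect is Step 1--2: obtaining the correct closed form (with signs) for the eigenvalues of $I_{2j}$ on $T_m$, and then summing the resulting hypergeometric series. If the direct summation becomes messy, I will fall back on the uniqueness argument of Proposition~\ref{R-matrix general method}. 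Since both sides are $so_{2r}$-invariant solutions on $\Delta^S$ with the same large-$u$ behaviour, their $\lambda$-ratios are forced to agree, and only Step 3 needs to be computed explicitly.
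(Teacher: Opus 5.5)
Your proposal is correct. The overall structure is the paper's: you expand $\hR^S(u)$ in the projectors $\proj^S_{r-2k}$, then fix the constant on the top component using $\frac{I_{2j}}{(2j)!}\proj_r^S=(-1)^j\binom rj\proj_r^S$ and $\sum_j\binom rj(x)_j(x)_{r-j}=(2x)_r$. Your fallback for the ratios is exactly the paper's argument.

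The genuine difference is Step 2. The paper never computes the eigenvalues of $I_{2j}$ on $T_m$ for $m<r$. It simply asserts that the coefficients of $\hR^S$ must obey \eqref{tau_k+2 tau_k conditions 2}, whose solution is unique up to a factor. This is short, but it leans on Proposition~\ref{R-matrix general method}, which has two weaknesses here:
\begin{itemize}
\item it gives only necessary conditions, and presupposes unitarity and symmetry of $P\hR^S$;
\item it presupposes the normalisation \eqref{R-matrix s.c. limit} of $u$. The YBE is invariant under $u\mapsto cu$, so the ratios are fixed only after the scale is fixed. That is precisely your large-$u$ check, which the paper skips.
\end{itemize}
Your explicit computation proves \eqref{ShanWit_and_our_solutions} outright and needs none of this. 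It is also tractable:
\begin{itemize}
\item With $x=u/2$ (first for $u>0$, then by continuation), write $(x)_j(x)_{r-j}=\frac{\Gamma(2x+r)}{\Gamma(x)^2}\int_0^1 s^{x+j-1}(1-s)^{x+r-j-1}\,ds$.
\item After $s=\sin^2\theta$, the sum $\sum_j(-1)^j s^j(1-s)^{r-j}\,[t^{2j}](1-t)^m(1+t)^{2r-m}$ becomes $\cos(2(r-m)\theta)$.
\item The classical integral $\int_0^\pi\sin^{u-1}\phi\,\cos(2k\phi)\,d\phi=\frac{\pi(-1)^k}{2^{u-1}u\,B(\frac{u+1}{2}+k,\frac{u+1}{2}-k)}$ then gives your ratio $-\frac{u-(2k+1)}{u+(2k+1)}$ directly.
\item With the duplication formula it also gives $\lambda_r=A(u)\prod_{l=0}^{r-1}(u+l)$.
\end{itemize}

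There are two slips to correct.
\begin{itemize}
\item $I_2=-16(N-2)\hC_\rho=-32(r-1)\hC_\rho$. So on the $c_{(2),m}$-eigenspace, $I_2=2[r(2r-1)-2m(2r-m)]=4(r-m)^2-2r$, twice your value.
\item The prefactor $2^r/\binom{2r}{m}$ in Step 1 is spurious, and it contradicts your own condition $\iota_0=1$. Identify $V_\rho\otimes V_\rho$ with $\Cl_{2r}$ via the charge-conjugation matrix. Then $I_n/n!=\sum_{|I|=n}\Gamma_I\otimes\Gamma_I$ acts as $X\mapsto\pm\sum_{|I|=n}\Gamma_I X\Gamma_I$. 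For even $n$ all signs cancel on the degree-$m$ and degree-$(2r-m)$ parts, giving exactly $[t^{n}](1-t)^m(1+t)^{2r-m}$.
\end{itemize}
For $m=r$ the corrected formula reproduces Lemma~\ref{Lemma: I_2k proj_r identity}. With this normalisation the claimed $\lambda_{r-2k}(u)$ can be verified directly for $r=2,3$.
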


The proof of this proposition is based on the following technical lemma.

\begin{lemma}\label{Lemma: I_2k proj_r identity}
The invariants $I_{2k}$ defined in \eqref{Ik Def} and the projector $\proj_r^S$ defined in \eqref{proj_k^S and proj_k^AS} satisfy
\begin{equation}\label{I_2k times proj_r}
\frac{I_{2k}}{(2k)!}\proj_r^S = (-1)^k \binom{r}{k} \proj_r^S, \qquad k=0,1,\dots,r.
\end{equation}
\end{lemma}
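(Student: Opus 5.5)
Since $I_{2k}$ is $so_{2r}$-invariant and each $I_{2k}$ is a polynomial in $\hC_\rho$ (by \eqref{I2 via hC} and \eqref{I2kI2 relation}), the plan is to compute that polynomial at the eigenvalue of $\hC_\rho$ on the image of $\proj_r^S$. The image of $\proj_r^S$ is $\proj_r$ for the appropriate parity of $r$ (Proposition~\ref{Prop: char identities for hC in rho otimes rho}); on it $\hC_\rho$ acts as the scalar $c_{(2),r}$ from \eqref{hC eigenvalues on Delta times Delta}. Hence $I_{2k}\proj_r^S=I_{2k}(c_{(2),r})\,\proj_r^S$ with $I_{2k}(\,\cdot\,)$ the scalar polynomial, and we must show that $i_{2k}:=I_{2k}(c_{(2),r})$ equals $(-1)^k\frac{(2k)!\,r!}{k!(r-k)!}$.

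To do this, I evaluate the recurrence \eqref{I2kI2 relation} at the scalar point. From \eqref{I2 via hC}, $i_2=-16(r-1)\cdot 2\,c_{(2),r}$ (here $N=2r$, so $-16(N-2)=-32(r-1)$). Using $c_{(2),r}=\frac{2r^2-r(2r-1)}{16(r-1)}=\frac{r}{16(r-1)}$, we get $i_2=-2r$, which matches $(-1)^1\cdot 2!\binom r1$. With $i_0=1$, the recurrence gives
\begin{equation*}
i_{2k+2}=\bigl(-2r-8k(r-k)\bigr)i_{2k}-4k(2k-1)(r+1-k)(2r+1-2k)\,i_{2k-2}.
\end{equation*}
Next I verify by induction that $i_{2k}=(-1)^k(2k)!\binom rk$ satisfies this. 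Dividing by $(-1)^{k}(2k)!\binom rk$, the claim reduces to the rational identity
\begin{equation*}
-(2k+2)(2k+1)\frac{r-k}{k+1}=-2r-8k(r-k)+\frac{4k(2k-1)(r+1-k)(2r+1-2k)}{2k(2k-1)}\cdot\frac{k}{r-k+1},
\end{equation*}
and this is checked by expanding both sides. The left side is $-2(2k+1)(r-k)$, and the right side is $-2r-8k(r-k)+2k(2r+1-2k)$. Both expand to $-4kr-2r+4k^2+2k$, so they agree.

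Since the derivation of \eqref{I2kI2 relation} holds as an operator identity for all $k$, no boundary issues arise. For $k=r$ the formula gives $(-1)^r(2r)!$, consistent with Lemma~\ref{Lemma 1}, which offers a sanity check. The main obstacle is the first step: I must be careful that the polynomial expressing $I_{2k}$ in terms of $\hC_\rho$ may be legitimately evaluated on $\proj_r^S$. This holds because $\proj_r^S$ is itself a polynomial in $\hC_\rho$ (a sum of the projectors \eqref{proj^(alpha,pm alpha) for even nu}--\eqref{proj^(alpha,pm alpha) for odd nu} extended by $\proj_{\epsilon\epsilon}$), so $\hC_\rho\proj_r^S=c_{(2),r}\proj_r^S$. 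I also need to check the normalisation constant in \eqref{I2 via hC} against $c_{(2),r}$. The rest is the routine algebra above.
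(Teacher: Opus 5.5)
Your proof is correct and follows the paper's own argument: the cases $k=0,1$ come from $\hC_\rho\proj_r^S=c_{(2),r}\proj_r^S$ together with \eqref{I2 via hC}, and the remaining cases follow by induction on $k$ using \eqref{I2kI2 relation}. The paper only asserts that induction step, whereas you carry it out explicitly (the reduced rational identity is right, and the values $i_{2k}$ are nonzero for $k\le r$); your check against Lemma~\ref{Lemma 1} at $k=r$ is also consistent.
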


\begin{proof}
The case $k=0$ is trivial. For $k=1$, note that the invariant $I_2$ is expressed in terms of the split Casimir operator $\hC_\rho$ by \eqref{I2 via hC}. Thus, relation \eqref{I_2k times proj_r} reduces to
\begin{equation}\label{hC_rho times proj_r}
\hC_\rho\proj_r^S = \frac{r}{16(r-1)} \proj_r^S.
\end{equation}
This relation holds because the coefficient of $\proj_r^S$ on the right-hand side of \eqref{hC_rho times proj_r} coincides with the eigenvalue \eqref{hC eigenvalues on Delta times Delta} of the operator $\hC_\rho$ in the representation $T_r$, as required. The remaining cases $k=2,\dots,r$ follow by induction on $k$ using relations \eqref{I2kI2 relation}.
\end{proof}

\begin{proof}[Proof of Proposition \ref{R^+ and R^++ + R^-- relation}]
It was shown in Section \ref{Sec: 3.1} that each invariant $I_{2k}$ can be expressed as a degree-$k$ polynomial in the split Casimir operator $\hC_\rho$ of the algebra $so_{2r}$ in the representation $\rho\otimes \rho$ (see \eqref{I2kI2 relation}). The operator $\hC_\rho$ itself admits a decomposition in terms of the projectors $\proj_k$ onto its eigenspaces given in \eqref{sums of projectors Delta to rho for even r} and \eqref{sums of projectors Delta to rho for odd r} (see also \eqref{Split Casimir expansion}). Therefore, the operator $\hR^S(u)$ defined in \eqref{ShanWit_sym+antisym} as a linear combination of the invariants $I_{2k}$ can likewise be written as a linear combination of these projectors. Since $\hR^S(u)$ acts on the representation space of $\Delta^S$, only those projectors that act on this same space can appear in its decomposition. These are precisely the projectors $\proj_k^S$ introduced in \eqref{proj_k^S and proj_k^AS} for which the parity of $k$ coincides with that of $r$. Thus, the operator $\hR^S(u)$ can be written in the form \eqref{R-general expansion}, with $\proj_k^S$ substituted for $\proj_k^{\epsilon\epsilon}$. Clearly, the functions $\tau_k(u)$ in this case must also satisfy equations \eqref{tau_k+2 tau_k conditions 2}, whose solution is unique up to an overall factor.
It follows that the operators $\hR^S(u)$ and $\hR^{++}(u)+\hR^{--}(u)$ are proportional.

To determine the proportionality coefficient, we multiply $\hR^S(u)$ by the projector $\proj_r^S$ and use Lemma \ref{Lemma: I_2k proj_r identity}, which yields
\begin{equation}\label{hR^S times proj_r}
\hR^S(u)\proj_r^S = A(u) \sum_{k=0}^r \binom{r}{k} \frac{\Gamma(\frac{u}{2}+k)\Gamma(\frac{u}{2}+r-k)} {\Gamma(\frac{u}{2})\Gamma(\frac{u}{2})} \proj_r^S = A(u) \prod_{k=0}^{r-1}(u+k) \proj_r^S,
\end{equation}
where in the second equality we used the standard identity for the rising factorial, $\frac{\Gamma(x+k)}{\Gamma(x)}=x(x+1)\cdots(x+k-1), \qquad x=\frac{u}{2}$, namely
\begin{equation}
\sum_{k=0}^r \binom{r}{k} \frac{\Gamma(x+k)\Gamma(x+r-k)} {\Gamma(x)\Gamma(x)} = \prod_{k=0}^{r-1}(2x+k).
\end{equation}
Multiplying the solution $\hR^{++}(u)+\hR^{--}(u)$ by $\proj_r^S$ gives
\begin{equation}\label{hR^++ + hR^-- times proj_r}
\bigl(\hR^{++}(u)+\hR^{--}(u)\bigr)\proj_r^S = \proj_r^S.
\end{equation}
Comparing \eqref{hR^S times proj_r} with \eqref{hR^++ + hR^-- times proj_r} yields \eqref{ShanWit_and_our_solutions}.
\end{proof}

Thus, we have shown that the method for constructing solutions of the Yang--Baxter equation \eqref{YBEbraid} proposed in \cite{MacKay} indeed makes it possible to obtain the even part of the solution \eqref{ShanWit_YBE_solution} with a minimal amount of computation.

\section*{Conclusion}

In this paper, characteristic identities for the split Casimir operator of the Lie algebra $so_{2r}$ in tensor products of its spinor representations of the same and opposite chiralities have been obtained by two independent methods. On the basis of these identities, projectors onto invariant subspaces of the corresponding representations were constructed and their traces computed.

The resulting expressions made it possible to compute explicitly the colour factor of a ladder Feynman diagram describing the interaction of two fermions in the spinor representation of positive chirality of the group $\Spin(2r)$ mediated by gluon exchange. The results of this work can be useful for calculations in Grand Unified Theories with gauge group $\Spin(10)$.

In addition, using the constructed projectors, solutions of the Yang--Baxter equation in tensor products of spinor representations of the same chirality were obtained. It was shown that the sum of these solutions is proportional to the even part of a previously known solution of the Yang--Baxter equation.

In conclusion, we briefly comment on the possibility of incorporating spinor representations of $so_N$ into a universal description of representations of simple Lie algebras $\mathfrak{g}$. Recall that Vogel’s universality arises from the study of tensor powers of the adjoint representation of $\mfg$. Owing to the existence of a non-degenerate Cartan–Killing metric $\kg_{AB}$, the Lie algebra $\mfg$ in the adjoint representation is naturally embedded in $\mathfrak{so}(\dim \mathfrak{g})$. However, this observation does not appear to be particularly useful for extending the universal description to spinor representations, since these representations do not arise within the adjoint sector of the representation theory of simple Lie algebras.

Another possible way to incorporate spinor representations of $\mathfrak{so}_N$ into a universal framework for Lie algebra representations is based on the observation that, for certain simple (super)algebras, one can choose a basis in which the structure constants are expressed in terms of Dirac gamma matrices (see, e.g., \cite{IsaIv,Stepan} and the references therein).

\section*{Acknowledgements}
 The authors thank A. Sleptsov for useful discussions on the prospects of including spinor representations of $\mathfrak{so}_N$ in the framework of the universal description of Lie algebras.

\appendix

\section{Lie algebra $so_N$ and the Clifford algebra $\Cl_N$}\label{app: so_N and Cl_N props}

In this appendix, we briefly recall the basic properties of the Lie algebra $so_N$, the Clifford algebra $\Cl_N$, and some of their representations that are used in this work (see, e.g., \cite{IsRu1,IsRu2}).

The basis elements $M_{ij}=-M_{ji}$ (where $i,j=1,\dots,N$) of $so_N$ satisfy the commutation relations
\begin{equation}\label{soStructureRelations}
[M_{i_1i_2},M_{j_1j_2}] = X^{k_1k_2}{}_{i_1i_2,j_1j_2} M_{k_1k_2} = \delta_{i_2j_1}M_{i_1j_2} -\delta_{i_2j_2}M_{i_1j_1} -\delta_{i_1j_1}M_{i_2j_2} +\delta_{i_1j_2}M_{i_2j_1}
\end{equation}
with structure constants
\begin{equation}\label{soStructureConstants}
X^{k_1k_2}{}_{i_1i_2,j_1j_2} = \delta_{i_2j_1}\delta^{[k_1}_{i_1}\delta^{k_2]}_{j_2} -\delta_{i_2j_2}\delta^{[k_1}_{i_1}\delta^{k_2]}_{j_1} -\delta_{i_1j_1}\delta^{[k_1}_{i_2}\delta^{k_2]}_{j_2} +\delta_{i_1j_2}\delta^{[k_1}_{i_2}\delta^{k_2]}_{j_1},
\end{equation}
where the square brackets denote antisymmetrisation of indices: $A^{[k_1k_2]}:=\frac{1}{2}\bigl(A^{k_1k_2}-A^{k_2k_1}\bigr)$. The components $\kg_{i_1i_2,j_1j_2}$ of the Cartan--Killing metric and the components $\okg^{i_1i_2,j_1j_2}$ of its inverse (see, e.g., \cite{IsRu1}) are given by
\begin{equation}\label{soKillingMetric}
\kg_{i_1i_2,j_1j_2} = 2(N-2)\bigl(\delta_{i_1j_2}\delta_{i_2j_1}-\delta_{i_1j_1}\delta_{i_2j_2}\bigr), \qquad \okg^{i_1i_2,j_1j_2} = \frac{1}{2(N-2)} \bigl(\delta^{i_1j_2}\delta^{i_2j_1}-\delta^{i_1j_1}\delta^{i_2j_2}\bigr).
\end{equation}

The main objects of interest in this work are the spinor representations of $so_N$.
One of the most convenient tools for studying them is the complex Clifford algebra $\Cl_N$ defined as the associative algebra with unit $I$ and generators $\Gamma_i$, $i=1,\dots,N$, satisfying
\begin{equation}\label{Clifford generators identities}
\Gamma_i\Gamma_j+\Gamma_j\Gamma_i
=
2\delta_{ij}I.
\end{equation}

A basis of $\Cl_N$ consists of the unit element $I$ and all completely antisymmetrised products of the generators $\Gamma_i$:
\begin{equation}\label{ClN basis}
\Gamma_{[i_1\dots i_k]} := \frac{1}{k!} \sum_{\sigma\in S_k} (-1)^{p(\sigma)} \Gamma_{i_{\sigma(1)}} \Gamma_{i_{\sigma(2)}} \dots \Gamma_{i_{\sigma(k)}},
\end{equation}
where $k=1,\dots,N$, and $p(\sigma)$ denotes the parity of the permutation $\sigma\in S_k$.
For convenience, we set the element \eqref{ClN basis} equal to $I$ for $k=0$.
We also note that $\Gamma_{[i_1\dots i_k]}=0$ for all $k>N$.

The standard realisation of the basis elements of $so_N$ satisfying \eqref{soStructureRelations} in terms of the generators $\Gamma_i$ of $\Cl_N$ is
\begin{equation}\label{so via Gammas}
M_{ij}:=\frac{1}{2}\Gamma_{[ij]}=\frac{1}{4}[\Gamma_i,\Gamma_j],
\end{equation}
where $[A,B]:=AB-BA$. From this and the explicit expression \eqref{soKillingMetric} for the inverse
Cartan--Killing metric $\okg^{i_1i_2,j_1j_2}$ of $so_N$, one obtains the following realisation of the split Casimir operator in terms of gamma matrices:
\begin{equation}\label{hC via Gammas}
\hC = \okg^{i_1i_2,j_1j_2} M_{i_1i_2}\otimes M_{j_1j_2} = -\frac{1}{16(N-2)} \Gamma^{[i_1i_2]}\otimes \Gamma_{[i_1i_2]}.
\end{equation}

In the case of even $N=2r$, to which we restrict ourselves in what follows,
the algebra $\Cl_N$ has a unique (up to equivalence) irreducible representation
$\rho$ acting on the space $V_{2^r}$ of dimension $2^r$. Clearly, this representation is restricted to the subalgebra $so_{2r}$ embedded in $\Cl_{2r}$ via \eqref{so via Gammas}.

We define the longest element
\begin{equation}
\Gamma_{2r+1}:=(-i)^r \Gamma_1\cdots\Gamma_{2r}
\end{equation}
of $\Cl_{2r}$, which satisfies the following standard properties:
\begin{equation}\label{Gamma_N+1 properties}
\Gamma_{2r+1}^2=I, \qquad \Gamma_{2r+1}\Gamma_i=-\Gamma_i\Gamma_{2r+1} \quad \text{for} \quad i=1,\dots,2r.
\end{equation}
Using this element, one constructs the invariant mutually orthogonal projectors
\begin{equation}\label{rho to Delta_pm projectors}
\proj_{\pm} := \frac{1}{2} \bigl(I_{2^r}\pm \rho(\Gamma_{2r+1})\bigr),
\end{equation}
which map the representation space of $\rho$ for $\Cl_{2r}$ onto the spaces of the spinor representations $\Delta_\pm$ of $so_{2r}$.

The eigenvalues of the quadratic Casimir operator $C_2$ in the representations $\Delta_\pm$ are computed using \eqref{c_2 via weights}, where the components of the highest weights are $\lambda_{\Delta_\pm} = \Bigl(\frac{1}{2},\dots,\frac{1}{2},\pm\frac{1}{2}\Bigr)$, and the scalar product in the root space is normalised according to \eqref{Root product norm}.
This yields
\begin{equation}\label{c_2 on Delta_pm}
c_2^{\Delta_\pm} = \frac{r(2r-1)}{16(r-1)}.
\end{equation}

For the representations $\Delta_\pm$, one naturally defines the dual (contragredient) representations $\overline{\Delta}_\pm$ by $\oDelta_\pm(M_{ij}) = -\Delta_\pm(M_{ij})^T$, where $T$ denotes matrix transposition. It is known (see, e.g., \cite{IsRu2}) that the following isomorphisms hold:
\begin{equation}\label{Delta bar(Delta) iso}
\begin{aligned}
\text{for even $r$}:\quad &\oDelta_\pm=\Delta_\pm,\\
\text{for odd $r$}:\quad &\oDelta_\pm=\Delta_{\mp}.
\end{aligned}
\end{equation}

\section{Defining representation of $so_N$ and its exterior powers}\label{app: so_N and its external powers}

In presenting an alternative approach to deriving the characteristic identities of the operator $\hC_{\epsilon\epsilon'}$ introduced in \eqref{hC_epsilon epsilon' def}, we use properties of the defining representation of $so_N$ and of its antisymmetrised tensor (exterior) powers. We briefly list these properties below.

The defining representation $T_f$ of $so_N$ is specified on the basis elements $M_{ij}$ by
\begin{equation}\label{soBasis T_f}
T_f(M_{ij})=e_{ij}-e_{ji}
\qquad \iff\qquad
T_f(M_{ij})^a{}_b=\delta^a_i\delta_{jb}-\delta^a_j\delta_{ib},
\end{equation}
where $e_{ij}$ are the matrix units, $(e_{ij})^a{}_b=\delta^a_i\delta_{jb}$. The eigenvalue of the quadratic Casimir operator $C_2$ in this representation follows directly from its definition \eqref{C_2 definition}, the explicit expression for the inverse Cartan-Killing metric \eqref{soKillingMetric} and the basis \eqref{soBasis T_f}, and is given by
\begin{equation}\label{c_2 on T_f}
c_2^{T_f}=\frac{N-1}{2(N-2)}.
\end{equation}

Consider the antisymmetrised tensor product of $k$ defining representations\footnote{Note that, with this notation, $T_1$ coincides with the defining representation $T_f$.}, $T_k:=T_f^{\wedge k}$, acting on the space $V_k:=V_N^{\wedge k}$ of antisymmetric tensors $t^{[i_1,\dots,i_k]} := \sum_{\sigma\in\mathbb{S}_k}(-1)^\sigma t^{i_{\sigma(1)},\dots,i_{\sigma(k)}}$. It is known (see, e.g., \cite{IsRu2}) that for $N=2r$ these representations are irreducible for $k<r$, while for $N=2r+1$ they are irreducible for $k\le r$. Moreover, the representations $T_k$ and $T_{N-k}$ are equivalent for all $k=0,\dots,N$, and henceforth we restrict ourselves to the indices $k=0,\dots,r$. The dimension of $T_k$ is easily computed and equals
\begin{equation}\label{T_f^(wedge k) dimension}
\dim T_k={N\choose k}\equiv \frac{N!}{k!(N-k)!}.
\end{equation}
To compute the corresponding eigenvalue of $C_2$, we use \eqref{c_2 via weights} taking into account that the highest weight $\lambda_k$ of $T_k$ in the orthogonal basis $\{e^{(i)}\}$ of the root space of $so_N$ has components $\lambda_k=(\underbrace{1,\dots,1}_{k},0,\dots,0)$, and that the Weyl vector has components $\delta=(r-1,r-2,\dots,1,0)$ for $N=2r$ and $\delta=(r-\frac{1}{2},r-\frac{3}{2},\dots,\frac{3}{2},\frac{1}{2})$ for $N=2r+1$ (see, e.g., \cite{IsRu2}). With the normalisation \eqref{Root product norm} of the scalar product on the root space of $so_N$, we obtain
\begin{equation}\label{c_2 on T_f^(wedge k)}
c_2^{T_k}=\frac{k(N-k)}{2(N-2)}.
\end{equation}

For $N=2r$, the representation $T_r$ of $so_{2r}$ is reducible and decomposes into the so-called self-dual and anti-self-dual representations $T_r^\pm$ acting on the spaces $V_{2r}^\pm$ of antisymmetric tensors $t_\pm^{[i_1,\dots,i_r]}$ satisfying the additional conditions (see, e.g., \cite{IsRu2})
\begin{equation}
\frac{(-i)^r}{r!}\varepsilon_{i_1\dots i_r k_1\dots k_r}\, t_{(\pm)}^{k_1\dots k_r}
=
\pm\, t_{(\pm)i_1\dots i_r},
\end{equation}
where $\varepsilon_{i_1\dots i_r k_1\dots k_r}$ is the totally antisymmetric tensor with $\varepsilon_{1,2,\dots,2r}=1$, and indices are raised and lowered using the metric $\delta_{ij}$. The dimensions of these representations are
\begin{equation}\label{T_(pm)^(wedge nu) dimension}
\dim T_r^\pm=\frac{1}{2}{2r \choose r}\equiv \frac{(2r)!}{2(r!)^2}.
\end{equation}
The corresponding eigenvalues of the quadratic Casimir operator are obtained from \eqref{c_2 via weights}, using that the highest weights of $T_r^\pm$ for $so_{2r}$ have components $\lambda_r^\pm=(1,\dots,1,\pm 1)$ (see, e.g., \cite{FultonHarris}) and the normalisation \eqref{Root product norm} of the scalar product on the root space of $so_{2r}$:
\begin{equation}\label{c_2 on T_pm^(wedge nu)}
c_2^{T_r^\pm}=\frac{r^2}{4(r-1)}.
\end{equation}

\end{document}